\definecolor{mycyan}{RGB}{0,204,204}
\begin{document}

\title{
Distributed Policy
Gradient 
for Linear \\Quadratic  Networked Control 
with Limited\\ Communication Range}

\author{Yuzi Yan,~\IEEEmembership{Student Member,~IEEE,} 
        and Yuan Shen,~\IEEEmembership{Senior Member,~IEEE,}
\thanks{The authors are with the Department of Electronic Engineering, and Beijing National Research Center for Information Science and Technology, Tsinghua University, Beijing 100084, China (e-mail: yyz21@mails.tsinghua.edu.cn; shenyuan\_ee@tsinghua.edu.cn).}
}

\markboth{Journal of \LaTeX\ Class Files,~Vol.~14, No.~8, August~2021}%
{Shell \MakeLowercase{\textit{et al.}}: A Sample Article Using IEEEtran.cls for IEEE Journals}


\maketitle
\begin{abstract}
This paper proposes a scalable distributed policy gradient method and proves its convergence to near-optimal solution in multi-agent linear quadratic networked systems. The agents engage within a specified network under local communication constraints, implying that each agent can only exchange information with a limited number of neighboring agents. On the underlying graph of the network, each agent implements its control input depending on its nearby neighbors' states in the linear quadratic control setting. We show that it is possible to approximate the exact gradient only using local information. Compared with the centralized optimal controller, the performance gap decreases to zero exponentially as the communication and control ranges increase. We also demonstrate how increasing the communication range enhances system stability in the gradient descent process, thereby elucidating a critical trade-off. The simulation results verify our theoretical findings.
\end{abstract}

\begin{IEEEkeywords}
Distributed optimization, networked system, near optimality, distributed gradient method, linear quadratic control, stability.
\end{IEEEkeywords}

\section{Introduction}
\label{sec:introduction}
\IEEEPARstart{R}{ecent} years have seen significant advances in optimizing stochastic dynamical multi-agent networked systems~\cite{Berahas2021on}, such as communication~\cite{chang2015multi},  cooperative detection~\cite{gu2020Cooperative,gu2021Quasi}, games~\cite{silver2016mastering}, formation~\cite{yan2022relative,pan2022flexible,yan2023approximation}, localization~\cite{shen2023theoretical}, smart grid~\cite{chen2022reinforcement}, etc. Compared to single-agent control, the multi-agent setting presents additional challenges due to communication limitations between agents, with scalability being a primary concern~\cite{zhang2018fully}. Although some scalable reinforcement learning (RL) methods have been proposed, there needs to be more theoretical understanding regarding their optimality and efficiency~\cite{zhang2021multi}. On the other hand, control theory provides us with a rich body of tools with provable guarantees~\cite{tian2022can,zhang2020stability,zhang2019policy,fazel2018global}. Specifically, linear quadratic regulator (LQR) is one of the most well-studied optimal control problems, which considers optimal state feedback control for a linear dynamical system~\cite{bakule2008decentralized}. Networked linear quadratic control has many applications, including transportation~\cite{bazzan2009opportunities}, power grid~\cite{pipattanasomporn2009multi}, etc.

Recent studies have shown that within a centralized LQR framework, it is feasible to derive the optimal controller using policy gradient descent techniques~\cite{fazel2018global,bu2019lqr}. However, to the best of our knowledge, the question of whether one can acquire a near-optimal decentralized controller through a scalable and distributed policy gradient descent approach still remains unresolved. There are many challenging problems at both theoretical and practical levels:
First, it is a non-convex optimization problem, where in general the gradient descent does not converge to the global optima in the limit.
Second, each agent must accurately approximate the exact global gradient using limited communication, which is critical to achieving scalability for excessively large networked multi-agent systems. 
Third, the dynamical system may suffer from the risk of becoming unstable during the distributed policy gradient process. It is necessary to provide a stability guarantee.

This paper first provides theoretical results about the problems mentioned above, encouraged by the recent success in the scalable optimization algorithm~\cite{li2021distributed}. With a few mild assumptions about the state transition matrix of the networked system in the Markov Decision Process (MDP) setting, every agent can approximate an accurate localized policy gradient in a distributed fashion. The proposed distributed policy gradient method can attain near-optimal solutions with a gap that is exponentially small relative to the communication range $\kappa$ and the control range $r$. By carefully selecting the step size $\eta$, the system’s stability can be ensured. Numerical results for several representative cases are also presented to illustrate these findings.

Our main contributions are summarized as follows.

\begin{itemize}
  \item [$\bullet$]
  \emph{Localized Gradient Approximation.}
  We provide conditions for accurate gradient approximation with limited local communication in a networked system.
  This work analyzes the conditions for the establishment of Exponential Decay Property in a networked LQR setting, which is a crucial concept in many scalable multi-agent RL algorithms. In specific, for the Exponential Decay Property to hold, we need some mild assumptions on the state transition matrix, which indicate that the interaction between agents should be weak enough or the network's connectivity should be low enough. 
  \item [$\bullet$]
  \emph{Stability Guarantee in the Gradient Descent Process.} We provide a stability guarantee during the gradient process. We quantify the two factors that may cause the system to become unstable: 1) the inappropriate step size, 2) the error brought by the localized gradient approximation. The guarantee is necessary because the stability of the system is a prerequisite for the performance analysis.
  \item [$\bullet$]
  \emph{Near Optimal Performance for the Distributed Policy Gradient.}  We show that the decentralized controller obtained through the distributed policy gradient achieves near-global optimality, where each agent's communication is restricted within its $\kappa$-hop local neighborhood and takes control actions based on its $r$-hop local observations. The optimality gap compared with the centralized optimal controller is proven to be exponentially small in $\kappa$ and $r$. 
\end{itemize}

\noindent\textbf{Notation.} Throughout the paper, we use $\Vert \cdot \Vert$ to denote the $l_2$ norm of a vector and the induced $l_2$ norm of a matrix. $\Vert \cdot \Vert_F$ denotes its Frobenius norm. $\Abf^\Tb$, $\rho(\Abf)$, and $\tr(\Abf)$ represent the transpose, spectral radius, and trace of the matrix $\Abf$. $\Abf \succeq \Bbf$ for two symmetric matrices refers to the positive semi-definiteness of their difference $\Abf-
\Bbf$. We let $\sigma_i(\Abf)$ represent the eigenvalues of a square matrix $\Abf$, which are indexed in increasing order for their real parts, i.e.,
$
\textbf{Re} (\sigma_1(\Abf)) \leq \cdots \leq \textbf{Re} (\sigma_n(\Abf))
$.
If $\Abf$ is symmetric, the ordering becomes $\sigma_1(\Abf) \leq \cdots \leq \sigma_n(\Abf)$. Suppose that $\Abf$ is a partitioned matrix, $[\Abf]_{ij}$ denotes the sub-matrix of $\Abf$ where its row indexes correspond to the role indexes of agent $i$ and its column indexes correspond to the indexes of agent $j$. We use $\overline{[\Abf]}$ to denote $\max_{i,j} \Vert [\Abf]_{ij} \Vert$ and $\underline{[\Abf]}$ to denote $\min_{i,j} \Vert [\Abf]_{ij} \Vert$. Without further explanation, $\dist(\cdot, \cdot)$ refers to the graph distance between two agents in the system throughout the paper. $\dist(i,i)$ is defined to be $0$.

For integer $\kappa > 1$, $\cN_i^\kappa$ denotes the $\kappa$-hop neighborhood of $i$, i.e., the agents whose graph distance to $i$ is less than or equal to $\kappa$, including $i$ itself. $\cN_{-i}^\kappa$ denotes the set including all the agents that are outside $\kappa$-hop neighborhood of $i$. On an undirected graph $\cG=(\cN, \cE)$, given a fixed $\kappa$, we use $N_{\cG}^{\kappa}$ to denote the number of the agents in the biggest $\kappa$-hop neighborhood, i.e., $N_{\cG}^{\kappa} = \max_i |\cN_i^\kappa|$.

\section{Related Work}
\label{section:related workds}
We briefly discuss related works in areas such as optimization in networked systems, learning-based control, and scalable multi-agent reinforcement learning. These works significantly inspired this paper. A comparative analysis is then presented to highlight the distinctions between our work and these prior studies.
\subsection{Learning-based Control} 
In the context of LQR, the classical model-based approach dates back to subspace-based system identification~\cite{ljung1987theory}. More recently, much progress has been made in algorithm design and nonasymptotic analysis in both centralized and networked LQR optimization. 
 
Several studies, including those by~\cite{dean2020sample,tu2018least,simchowitz2018learning}, have addressed the issue from the perspective of a single agent. In \cite{fazel2018global}, the authors have provided proof of global optimality in the centralized LQR setting, specifically concerning deterministic policies and non-noisy transitions. Notably, their paper introduces a model-free algorithm based on zero-order optimization. Furthermore, \cite{yang2019provably} has demonstrated that an actor-critic algorithm for the single-agent LQR achieves optimal performance with a linear convergence rate. Global convergence of policy optimization methods for other control problems has also been studied lately \cite{zhang2019policy,zhang2021policy,guo2022global}, see \cite{hu2023toward} for an overview of the recent results on policy optimization for control.

Some other works~\cite{rotkowitz2005characterization,gagrani2018thompson} try to address the issue in a distributed fashion, where centralized methods face challenges related to scalability and communication overhead in extensive networked systems. In~\cite{bu2019lqr}, the authors examine LQR feedback synthesis with a sparsity pattern and guarantee a sublinear rate of convergence to a first-order stationary point. In~\cite{li2021distributed}, a zero-order distributed policy optimization algorithm is proposed for networked LQR. The algorithm is guaranteed to approach the stationary point. In~\cite{shin2022near}, the performance between the truncated distributed controller and the optimal controller is proved to be exponentially small  with some additional assumptions. In~\cite{zhang2022optimal}, the authors show that the optimal LQR state feedback gain is ``quasi''-SED (spatially-exponential decaying) with system matrices being SED between nodes in the network, suggesting that distributed controllers can achieve near-optimal performance for SED system. 

\subsection{Multi-agent Reinforcement Learning} 
Our problem can also be viewed as a special case of the cooperative setting in multi-agent RL (MARL), i.e., the Decentralized Partially Observable Markov Decision Process (Dec-POMDP). This intricate framework for multi-agent systems is characterized by limited agent observability. In a Dec-POMDP, multiple agents cooperate under the constraint of not having a complete perspective of the environment or the full actions of their peers~\cite{bernstein2002complexity}. 

In response to the challenges posed by Dec-POMDPs, numerous MARL algorithms have been developed. The policy gradient approach, within this space, emerges as a popular choice~\cite{sutton1999policy,silver2014deterministic}. In ~\cite{qu2019exploiting,qu2020average,qu2020scalable}, the authors introduced a Scalable Actor-Critic methodology, proficient at discerning a near-optimal localized policy in tabular RL contexts. In~\cite{alfano2021dimension}, the authors crafted a scalable algorithm rooted in the NPG framework, demonstrating its convergence to the globally optimal policy. Their analysis further illustrates that the performance disparity diminishes to zero at an exponentially rapid rate, contingent upon the communication range. In~\cite{zhang2022global}, the authors postulated a Localized Policy Iteration (LPI) algorithm, empirically validated to adeptly acquire a policy that is nearly globally optimal, leveraging solely localized data within a networked system. Their findings also spotlight that the optimality gap recedes polynomially, governed by the communication range parameter, $\kappa$.

\subsection{Comparison with Prior Work}
The results most pertinent to our study include those in \cite{qu2020average}, \cite{zhang2022optimal}, and \cite{shin2022near}. Here we aim to delineate the distinctions between our results and theirs. In~\cite{qu2020average}, the authors demonstrate that the Exponential Decay Property holds for MDPs with average rewards in the tabular case. In contrast, our study delves into continuous spaces under the LQR setting.  In~\cite{zhang2022optimal,shin2022near}, the authors elucidate the spatially exponential decaying attribute of the optimal controller concerning the networked LQR problem. Our perspective, however, concentrates on \emph{finding}  the near-optimal solution within the projected space. Notably, our contribution provides a distributed, scalable policy gradient technique to unearth this solution. We also assert its stability, a feature not showcased in the aforementioned studies.

\section{Preliminaries and Background}
\label{section:preliminary}

\subsection{Problem Formulation}
\label{subsection:problemformulation}
\begin{figure}
    \centering
    \includegraphics[width=0.8\linewidth]{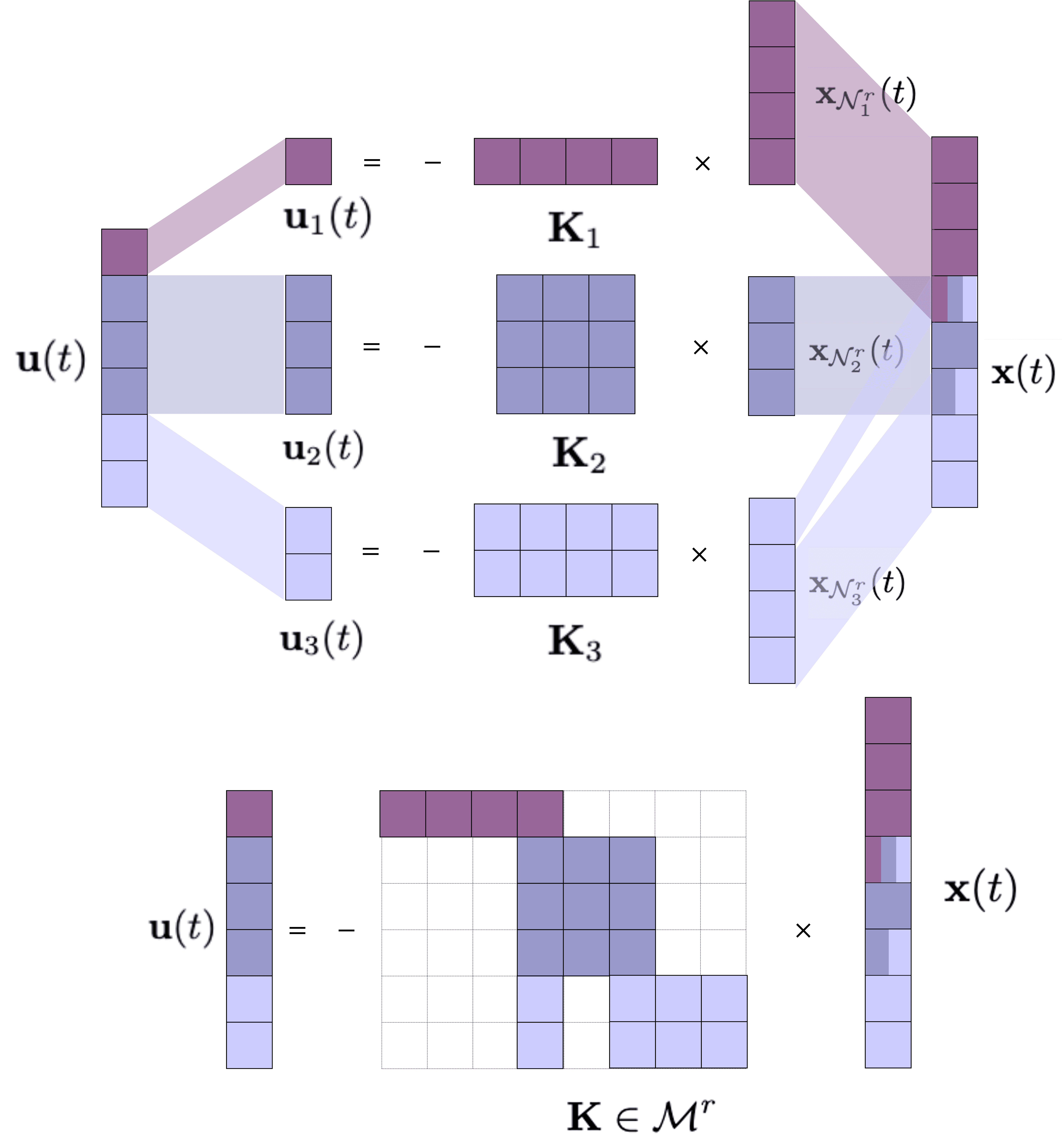}
    \caption{\small An illustrative diagram for a networked system. The top figure illustrates the local control inputs, local controllers, and local observations; and the bottom figure provides a global perspective where the optimization objective controller $\Kbf \in \cM^r$. Note that we omit the noise added to the control output for simplicity.}
    \label{fig:figure_diagram1}
\end{figure}

\begin{figure*}
    \centering
    \includegraphics[width=0.8\linewidth]{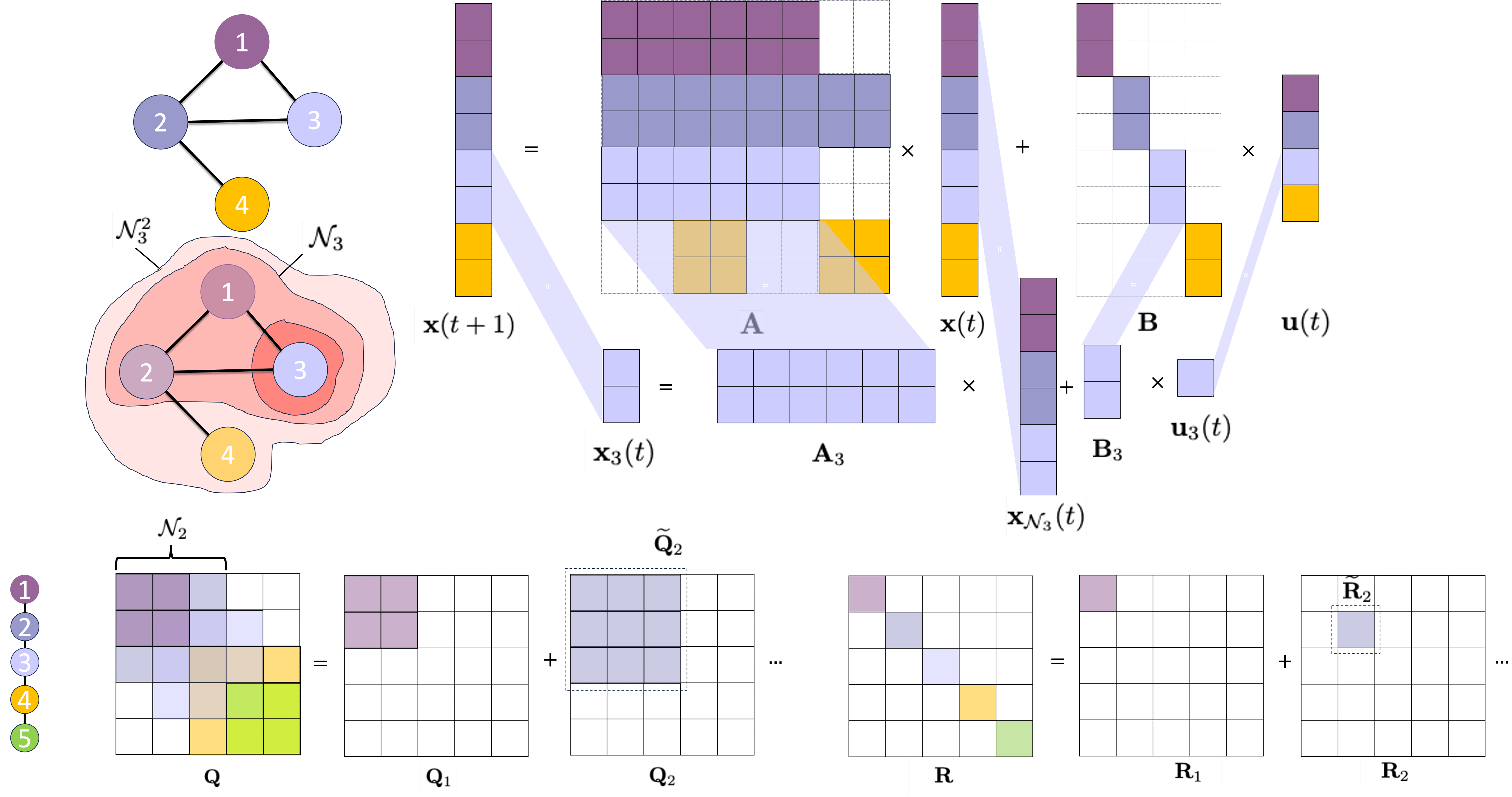}
    \caption{\small A diagram to show the spatial structures of the system matrices, including $\Abf$, $\Bbf$, $\Qbf$ and $\Rbf$.}\label{fig:figure_diagram2}
\end{figure*}
First, we introduce the basic setup of networked LQR control. Suppose there are $n$ agents jointly controlling a discrete-time linear dynamic system. The state space $\cX$ and action space $\cU$ are $\RR^{d_x}$ and $\RR^{d_u}$. The transition dynamics and the quadratic state cost $c(t)$ of the whole system are given by: 
\mequa
 \xbf(t+1) & = \Abf \xbf(t) + \Bbf \ubf (t) + \wbf(t), \\
 c(t) & =\xbf(t)^{\Tb} \Qbf \xbf(t) + \ubf(t)^{\Tb} \Rbf \ubf(t),
\mequa
where $\xbf(t)\in \RR^{d_x}$ and $\ubf(t)\in \RR^{d_u}$, $\wbf(t) \sim N(\mathbf{0}, \bPhif)$ denotes random noise that is i.i.d. for each time step $t$. Each agent $i$ is associated with a local state $\xbf_i(t) \in \RR^{d_x^i}$ and a control input $\ubf_i(t)\in \RR^{d_u^i}$, which jointly constitute the global state $\xbf(t) = [\xbf_1(t)^\Tb, \xbf_2(t)^\Tb, ..., \xbf_n(t)^\Tb]^\Tb$ and the global control input $\ubf(t) = [\ubf_1(t)^\Tb, \ubf_2(t)^\Tb, ..., \ubf_n(t)^\Tb]^\Tb$. Note that $d_x = \sum_{i=1}^n d_x^i$ and $d_u = \sum_{i=1}^n d_u^i$.

Next, we illustrate the structural properties of the network and demonstrate how these properties confer local structural characteristics to the optimization. We consider the case that there exists a network of $n$ agents with an underlying undirected graph $\cG=(\cN, \cE)$, where $\cN=\{ 1,2, \cdots, n\}$ is the set of agents and $\cE \subseteq \cN \times \cN$ is the set of edges. As a common setting in many practical scenarios~\cite{qu2019exploiting, qu2020average, qu2020scalable}, the next state of each agent $i$, denoted as $\xbf_i(t+1)$, along with the quadratic local cost $c_i(t)$, is dependent solely on the state of its 1-hop neighbors (denoted by $\cN_i$) and its own control input at time step $t$. We can use a series of smaller matrices $\{\Abf_i\}_{i=1}^{n}$ and $\{\Bbf_i\}_{i=1}^{n}$ to describe the local transition process. The transition dynamics of the linear system and the local cost could be rewritten as
\mequa
\xbf_i(t+1) &= \Abf_i \xbf_{\cN_i}(t) + \Bbf_i \ubf_i(t)+\wbf_i(t), \\
c_{i}(t) &= \xbf_{\cN_i}(t)^\Tb \tilde{\Qbf}_i \xbf_{\cN_i}(t) + \ubf_{i}(t)^\Tb \tilde{\Rbf}_i \ubf_{i}(t),
\mequa
where $\xbf_{\cN_i}$ is the concatenated state vector of agent $i$'s 1-hop neighbors. The global cost $c(t)=\frac{1}{n}\sum_{i=1}^{n}c_i(t)$.  $\tilde{\Qbf}_i$ and $\tilde{\Rbf}_i$ are positive definite matrices that parameterize the local quadratic costs. Note that if we pad $\tilde{\Qbf}_i$ and $\tilde{\Rbf}_i$ with 0, we can create $\Qbf_i$ and $\Rbf_i$,  and make them    the same size as $\Qbf$ and $\Rbf$. Then, the local state cost function $c_i(t)$ can be written as
\mequa
& c_i(t)=\xbf(t)^\Tb \Qbf_i \xbf(t) + \ubf(t)^\Tb \Rbf_i \ubf(t).
\mequa
We further define $\xbf_{\cN_i^\kappa}$ and $\ubf_{\cN_i^\kappa}$ to be the joint state and control vectors of agent $i$'s $\kappa$-hop neighbors, $\xbf_{\cN_{-i}^\kappa}$ and $\ubf_{\cN_{-i}^\kappa}$ to be the joint state and control vectors outside of the $\kappa$-hop neighborhood.\footnote{In the following, sometimes we use the form such as $\xbf=(\xbf_{\cN_i^\kappa},\xbf_{\cN_{-i}^{\kappa}})$. Note that we omit the rearrangement of the index of agents for notation simplicity.} Given that the global cost of the system $c(t)$ is the average of all the local costs $c_i(t)$, it follows that $\Qbf=\frac{1}{n} \sum_{i=1}^n \Qbf_i$ and $\Rbf=\frac{1}{n} \sum_{i=1}^n \Rbf_i$.

Next, we introduce the optimization objective. In the context of distributed optimization, the admissible local controllers are constrained to those that solely utilize the current state of $r$-hop neighbors, denoted as $\cN_i^r$, as input. $r$ reflects how distributed the  controllers could be. If $r$ reaches the diameter of the underlying graph, the controller will degenerate into a centralized one. Throughout the rest of this paper, we focus on the family of stochastic linear-Gaussian policies, i.e., $\pi_{\Kbf_i}(\cdot | \xbf_{\cN_i^r})=N(-\Kbf_i \xbf_{\cN_i^r}, \sigma_0^2 \bI_{d_u^i}), \Kbf_i\in \RR^{d_u^i \times \sum_{j\in \cN_i^r} d_x^j}$. We use $\Kbf$ to denote the control gain matrix of the entire system, $\pi_{\Kbf} (\cdot | \xbf)=N(-\Kbf \xbf, \sigma_0^2 \bI_{d_u}), \Kbf \in \RR^{d_u \times d_x}$. It is straightforward to see $\pi_{\Kbf} (\ubf | \xbf) = \prod_{i=1}^{n} \pi_{\Kbf_i}(\ubf_i | \xbf_{\cN_i^r})$.\footnote{We sometimes use $\left\{ \Kbf_1, \Kbf_2, ..., \Kbf_n \right\}$ to denote the controllers from the local perspective and note that it is equivalent to $\Kbf$ from the global perspective. We can combine all the $\Kbf_i$ and pad it with 0 to a $\RR^{d_u \times d_x}$ matrix to get $\Kbf$.}

\begin{definition}
\label{def:Mr}
For any integer $\kappa>1$, for a $n$-by-$n$ block matrix $\Mbf$, we define
\mequa
\cM^{\kappa}= \{ \Mbf: [\Mbf]_{ij}=0 \;\; \text{i.f.f.} \;\; \dist(i,j)>\kappa, \forall i,j \},
\mequa
and for the matrices related to a certain agent $i$, we define
\mequa
\cM_i^{\kappa}=\{ \Mbf : [\Mbf]_{ij}=0 \;\; \text{i.f.f.} \;\; \dist(i,j)>\kappa, \forall j\}.
\mequa
\end{definition}

It is direct to see that $\Abf$, $\Bbf$, $\Qbf$, $\Rbf$ and $\Kbf$ are all $n$-by-$n$ block matrices and $\Abf, \Qbf\in \cM^2$, $\Bbf, \Rbf \in \cM^0$. The goal is to find a distributed controller $\Kbf \in \cM^r$, which is stabilizing in the sense that it ensures the system's state converges to a stable equilibrium, thereby minimizing the infinite-horizon average cost among all agents, 
\mequa
\min_{\Kbf} C(\Kbf) & = \min_{\Kbf} \lim_{T \rightarrow \infty} \frac{1}{T}
\sum_{t=0}^{T-1} \mathbb{E}c(t) \\
& = \min_{\Kbf} \lim_{T \rightarrow \infty} \frac{1}{T}
\sum_{t=0}^{T-1} \mathbb{E}\frac{1}{n}\sum_{i=1}^n c_i(t) \\
& =: \min_{\Kbf} \frac{1}{n}\sum_{i=1}^n C_i(\Kbf)
\mequa
where we define $C_i(\Kbf)= \lim_{T \rightarrow \infty} \frac{1}{T} \sum_{t=0}^{T-1} \EE c_i(t)$. 

To illustrate the localized structure of the policy optimization problem for the networked LQR networked system, we present two schematic diagrams (Fig.~\ref{fig:figure_diagram1} and Fig.~\ref{fig:figure_diagram2}), where the relationship between the local perspective and the global perspective is shown clearly.

\subsection{Critical Concepts in LQR and RL}
\label{subsec:keyconceptsin}
To enhance the comprehension of the forthcoming derivation, this subsection introduces some important concepts related to RL and LQR. First, we rewrite the state transition dynamics from the global perspective, 
\#\label{equ: markov chain}
\xbf(t+1)=(\Abf-\Bbf \Kbf)\xbf(t)+\mathbf{\epsilon}(t), 
\#
where $\mathbf{\epsilon}(t) \sim N(\mathbf{0}, \bPsif)$
and $\bPsif = \bPhif+\sigma_0^2 \cdot \Bbf \Bbf^\Tb$, following the problem formulation in Section~\ref{subsection:problemformulation}.

In the setting of LQR, for any $\Kbf \in \RR^{d_u \times d_x}$ such that $\rho(\Abf-\Bbf \Kbf)<1$, let $\Pbf_\Kbf$ be the unique positive definite solution to the Lyapunov equation
\#\label{equ:definition of P_K in Bellman equation}
\Pbf_\Kbf = (\Qbf+\Kbf^\Tb \Rbf \Kbf) + (\Abf-\Bbf \Kbf)^\Tb \Pbf_\Kbf (\Abf-\Bbf \Kbf).
\#

Note that when $\rho(\Abf-\Bbf \Kbf)<1$, the Markov chain in (\ref{equ: markov chain}) has stationary distribution $\nu_\Kbf \sim N(\mathbf{0}, \mathbf{\Xi}_
\Kbf)$, where $\mathbf{\Xi}_\Kbf$ is the unique positive definite solution to the Lyapunov equation
\#\label{sigma_K}
\mathbf{\Xi}_\Kbf = \bPsif + (\Abf-\Bbf \Kbf) \mathbf{\Xi}_\Kbf (\Abf-\Bbf \Kbf)^\Tb.
\#

With a fixed controller $\Kbf \in \cM^r$, the relative action-value function is defined as
\begin{equation}\label{equ: definition of Q_K globally 1}
\begin{aligned}
Q_\Kbf(\xbf,\ubf) = \sum_{t=0}^{\infty} \, \EE\,\big[c(t) &- C(\Kbf)\,\big|\, \xbf(0) = \xbf, \ubf(0) = \ubf, \\ 
&  \ubf(t)\sim  \pi_\Kbf(\cdot|\xbf(t)), t \geq 1\big] .
\end{aligned}
\end{equation}
In the setting of LQR, the state-value function is quadratic for a fixed controller $\Kbf$ (Proposition 3.1, \cite{yang2019provably}),
\mequa
 Q_\Kbf(\xbf,\ubf) & =  \xbf^\Tb (\Qbf+\Abf^\Tb \Pbf_\Kbf \Abf) \xbf + \xbf^\Tb (\Abf^\Tb \Pbf_\Kbf \Bbf) \ubf \\
& \quad + \ubf^\Tb (\Bbf^\Tb \Pbf_\Kbf \Abf) \xbf
 + \ubf^\Tb (\Rbf+\Bbf^\Tb \Pbf_\Kbf \Bbf)\ubf \\
 & \quad - \tr(\Pbf_\Kbf \mathbf{\Xi}_\Kbf) - {\sigma_0}^2\tr(\Rbf+\Pbf_\Kbf \Bbf \Bbf^\Tb) .
\mequa
We further define the \emph{localized}  $Q$ function $Q_{\Kbf}^i$: 
\mequa
Q_\Kbf^i(\xbf, \ubf)= \sum_{t=0}^{\infty}  \,\EE\,\big[c_i(t)-C_i(\Kbf)\,\big|&\, \xbf(0) = \xbf,  \ubf(0) = \ubf, \\
& \ubf(t)\sim  \pi_\Kbf(\cdot|\xbf(t)), t \geq 1\big] .
\mequa
It is straightforward that $Q_{\Kbf}(\xbf,\ubf)=\frac{1}{n} \sum_{i=1}^{n} Q_{\Kbf}^i(\xbf,\ubf)$. Now we can introduce the policy gradient theorem in LQR.  

\begin{lemma} \label{lem:policy gradient theorem} \cite{yang2019provably}
[Policy Gradient Theorem]
\mequa
\nabla_{\Kbf} C(\Kbf) = \EE_{\xbf \sim \nu_\Kbf, \ubf\sim \pi_\Kbf(\cdot | \xbf)} \big[\nabla_{\Kbf} \log \pi_\Kbf(\ubf|\xbf)\cdot Q_\Kbf(\xbf,\ubf)\big] .
\mequa
In LQR, it can be equivalently expressed as
\mequa
\nabla_{\Kbf} C(\Kbf) = 2[(\Rbf+\Bbf^\Tb \Pbf_\Kbf \Bbf)\Kbf -\Bbf^\Tb \Pbf_\Kbf \Abf] \mathbf{\Xi}_\Kbf .
\mequa
\end{lemma}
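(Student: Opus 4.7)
The plan is to establish the two identities in sequence. The first is the standard average-cost policy gradient theorem adapted to our setting, while the second specializes it by exploiting the closed-form quadratic structure of $Q_\Kbf$ together with the Gaussian-policy parameterization.

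For the first identity, I would start from the average-cost Bellman relation $Q_\Kbf(\xbf,\ubf) = c(\xbf,\ubf) - C(\Kbf) + \EE_{\xbf'}[V_\Kbf(\xbf')]$, where $V_\Kbf(\xbf) := \EE_{\ubf\sim \pi_\Kbf(\cdot|\xbf)}[Q_\Kbf(\xbf,\ubf)]$ and $\xbf'$ denotes the next state. Differentiating $V_\Kbf(\xbf) = \int \pi_\Kbf(\ubf|\xbf)\,Q_\Kbf(\xbf,\ubf)\,d\ubf$ in $\Kbf$ and using the log-derivative identity $\nabla_\Kbf \pi_\Kbf = \pi_\Kbf\,\nabla_\Kbf\log\pi_\Kbf$ gives
\[
\nabla_\Kbf V_\Kbf(\xbf) = \EE_\ubf\big[\nabla_\Kbf\log\pi_\Kbf(\ubf|\xbf)\,Q_\Kbf(\xbf,\ubf)\big] + \EE_\ubf\big[\nabla_\Kbf Q_\Kbf(\xbf,\ubf)\big].
\]
Unrolling $\nabla_\Kbf Q_\Kbf$ through the Bellman equation produces $-\nabla_\Kbf C(\Kbf) + \EE_{\xbf'}[\nabla_\Kbf V_\Kbf(\xbf')]$ inside the second expectation. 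Integrating over $\xbf\sim\nu_\Kbf$ then lets the $\nabla V$ terms telescope by stationarity of $\nu_\Kbf$, leaving the desired REINFORCE-style identity.

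For the LQR-specific closed form, I would compute the resulting Gaussian expectation explicitly. Since $\pi_\Kbf(\ubf|\xbf) = N(-\Kbf\xbf,\sigma_0^2 \bI_{d_u})$, a direct calculation gives $\nabla_\Kbf \log\pi_\Kbf(\ubf|\xbf) = -\sigma_0^{-2}(\ubf+\Kbf\xbf)\xbf^\Tb$. Reparameterizing $\ubf = -\Kbf\xbf + \sigma_0\,\xi$ with $\xi\sim N(\mathbf{0},\bI_{d_u})$ independent of $\xbf\sim N(\mathbf{0},\mathbf{\Xi}_\Kbf)$ simplifies this to $-\sigma_0^{-1}\xi\xbf^\Tb$. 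Substituting the quadratic form of $Q_\Kbf$ from Section~\ref{subsec:keyconceptsin} and expanding $\xi\xbf^\Tb\cdot Q_\Kbf(\xbf,\ubf)$ term by term, monomials constant in $\xi$ vanish since $\EE[\xi]=\mathbf{0}$ and $\xi\perp \xbf$, while monomials quadratic in $\xi$ paired with a lone $\xbf$ vanish since $\EE[\xbf]=\mathbf{0}$. The surviving terms are those linear in $\xi$, which pair via $\EE[\xi\xi^\Tb]=\bI$ and $\EE[\xbf\xbf^\Tb]=\mathbf{\Xi}_\Kbf$: the two cross terms $\xbf^\Tb(\Abf^\Tb \Pbf_\Kbf \Bbf)\ubf$ and $\ubf^\Tb(\Bbf^\Tb \Pbf_\Kbf \Abf)\xbf$ each contribute $-\Bbf^\Tb \Pbf_\Kbf \Abf\,\mathbf{\Xi}_\Kbf$, while the linear-in-$\xi$ piece inside $\ubf^\Tb(\Rbf+\Bbf^\Tb \Pbf_\Kbf \Bbf)\ubf$ (arising from $\ubf = -\Kbf\xbf + \sigma_0\xi$) contributes $2(\Rbf+\Bbf^\Tb \Pbf_\Kbf \Bbf)\Kbf\,\mathbf{\Xi}_\Kbf$. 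Summing yields $2[(\Rbf+\Bbf^\Tb \Pbf_\Kbf \Bbf)\Kbf - \Bbf^\Tb \Pbf_\Kbf \Abf]\,\mathbf{\Xi}_\Kbf$ as claimed.

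The main obstacle will be the careful Gaussian-moment bookkeeping in the second step: one has to track the interplay between the $\sigma_0^{-1}$, $\sigma_0$, and $\sigma_0^2$ factors, correctly identify which of the many monomials in the expansion are odd in either $\xi$ or $\xbf$ (and hence vanish), and verify that the signs of the two cross-term contributions add rather than cancel so as to produce the overall factor of $2$. An equivalent complementary route is to differentiate the closed-form $C(\Kbf) = \tr(\Pbf_\Kbf\bPsif) + \sigma_0^2\tr(\Rbf)$ through the Lyapunov equation~\eqref{equ:definition of P_K in Bellman equation}, converting the problem into a matrix-calculus exercise on $\Pbf_\Kbf$; either way the Lyapunov identity~\eqref{sigma_K} for $\mathbf{\Xi}_\Kbf$ is what ultimately produces the factor $\mathbf{\Xi}_\Kbf$ on the right.
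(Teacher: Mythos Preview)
Your proposal is correct. The paper itself does not give a proof of this lemma at all: it is stated as a quotation from \cite{yang2019provably} (see the bracket right after the lemma label), so there is nothing in the paper to compare your argument against. Your derivation---the average-cost policy-gradient identity via the Bellman recursion and stationarity, followed by the explicit Gaussian-moment computation using the reparameterization $\ubf=-\Kbf\xbf+\sigma_0\xi$---is exactly the standard route and the bookkeeping you outline checks out term by term.
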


\section{Main Results}
\label{sec:mainresults}
In this section, we first illustrate the localized gradient approximation method by revisiting the Exponential Decay Property~\cite{qu2019exploiting,qu2020average,qu2020scalable} in the networked LQR setting. Then, we present an algorithmic  framework to solve the networked LQR control problem by using distributed policy gradient descent (Algorithm~\ref{alg:main}). We establish  the theoretical bound of the performance gap between the controller $\Kbf(T)$ obtained by $T$-step gradient descent and the centralized optimal controller $\Kbf^*$, given the localized approximation of the gradient derived from the Exponential Decay Property. Additionally, since the gradient approximation introduces approximation errors in every descent step and the stability may fail, we prove that all the controllers generated by the gradient descent algorithm are stabilizing if the communication range $\kappa$ is large enough and step size $\eta$ is chosen appropriately. Finally, we present the conditions for Exponential Decay Property to hold and provide several representative examples to verify their feasibility.

\subsection{Localized Gradient Approximation}
\label{subsection:exponentialdecayproperty}
This subsection presents the localized gradient approximation method for the development of the scalable algorithm. We revisit a pivotal property, the Exponential Decay Property~\cite{qu2019exploiting,qu2020average,qu2020scalable}, within the LQR framework, which is essential for realizing the localized gradient approximation.

First, we show why we can not access the accurate gradient directly from a local perspective. According to Lemma~\ref{lem:policy gradient theorem}, to get the gradient for $\Kbf_i$, we have,
\mequa
\nabla_{\Kbf_i} C(\Kbf) =   \EE_{\xbf, \ubf} \big[Q_\Kbf(\xbf,\ubf) \nabla_{\Kbf_i} \log \pi_{\Kbf_i}(\ubf_i|\xbf_{\cN_i^r})\big],
\mequa
where we use the fact that $\pi_{\Kbf}(\ubf|\xbf)=\prod_{i=1}^{N} \pi_{\Kbf_i}(\ubf_i|\xbf_{\cN_i^r})$. $Q_{\Kbf}(\xbf,\ubf)$ depends on the global state $\xbf$ and the global action $\ubf$, but each agent $i$ only has access to the information confined within their 
$\kappa$-hop neighborhood ($\cN_i^{\kappa}$). Consequently, calculating the precise gradient from a local standpoint becomes challenging due to the the communication range limitations.

\begin{remark}
We set the control range $r$ and the communication range $\kappa$ as two different variables to separately assess their impacts on the performance degradation respectively. The analysis setting also follows some previous works such as~\cite{alfano2021dimension, shin2022near}. Note that in real-world scenarios, the communication range $\kappa$ and the control range $r$ are usually the same.
\end{remark}

However, the sparsity brought by the network structure makes it possible to employ localized gradient approximation. It is found that the effect from agents far away diminishes with the growth of the distance, therefore, when estimating the Q-function and computing the gradient, the influence of distant agents might be ignored at an acceptable cost. Such a property is called Exponential Decay Property and has been studied in the tabular RL case~\cite{qu2019exploiting, qu2020average,qu2020scalable}. In the following, we re-clarify such a property in the context of the LQR setting. 

\begin{definition}[Exponential Decay Property]
\label{def: Exponential Decay Property}
For a fixed controller $\Kbf$, the Exponential Decay Property holds if,  $\forall i\in \cN$, $\xbf=(\xbf_{\cN_i^\kappa},\xbf_{\cN_{-i}^{\kappa}})$,
$\xbf'=(\xbf_{\cN_i^\kappa},\xbf'_{\cN_{-i}^{\kappa}})$, $\ubf=(\ubf_{\cN_i^\kappa},\ubf_{\cN_{-i}^{\kappa}})$, 
$\ubf'=(\ubf_{\cN_i^\kappa},\ubf'_{\cN_{-i}^{\kappa}})$,
the localized $Q_{\Kbf}^i$ satisfies,
\mequa
\big|Q_{\Kbf}^i(\xbf,\ubf)-Q_{\Kbf}^i(\xbf',\ubf')\big|  \leq  \CC(\xbf,\ubf) \rho^{\kappa+1},
\mequa
for some $\CC(\xbf,\ubf)>0$ that depends on $\xbf$ and $\ubf$, and some $\rho\in(0,1)$.
\end{definition}

This property illustrates that with increasing distance between two agents, the mutual dependence on $Q_{\Kbf}^i$ diminishes exponentially. Consequently, we explore a class of \emph{truncated} $Q$ functions, wherein the dependence on distant agents is selectively truncated,
\begin{equation}
\label{equ:truncatedrelativefunction}
\begin{aligned}
\QQ^i_{\Kbf, \kappa}(\xbf_{\cN_i^\kappa}, \ubf_{\cN_i^\kappa}) = & \int \int \zeta_i(\xbf_{\cN_{-i}^{\kappa}}, \ubf_{\cN_{-i}^\kappa}; \xbf_{\cN_{i}^\kappa}, \ubf_{\cN_{i}^\kappa}) \\
& \hspace{-1cm} Q_{\Kbf}^i(\xbf_{\cN_{i}^\kappa}, \xbf_{\cN_{-i}^{\kappa}}, \ubf_{\cN_{i}^\kappa}, \ubf_{\cN_{-i}^\kappa}) d \xbf_{\cN_{-i}^{\kappa}} d \ubf_{\cN_{-i}^\kappa}.
\end{aligned}
\end{equation}
$\zeta_i(\xbf_{\cN_{-i}^{\kappa}}, \ubf_{\cN_{-i}^\kappa}; \xbf_{\cN_{i}^\kappa}, \ubf_{\cN_{i}^\kappa})$ are distributions conditioned on $\xbf_{\cN_{i}^\kappa}, \ubf_{\cN_{i}^\kappa}$, $\forall \xbf_{\cN_{i}^\kappa}, \ubf_{\cN_{i}^\kappa}$, satisfying,
\mequa
& \int \int \zeta_i(\xbf_{\cN_{-i}^{\kappa}}, \ubf_{\cN_{-i}^\kappa}; \xbf_{\cN_{i}^\kappa}, \ubf_{\cN_{i}^\kappa})d \xbf_{\cN_{-i}^{\kappa}} d \ubf_{\cN_{-i}^\kappa}=1 .
\mequa

In the following lemma, we will show that $\QQ^i_{\Kbf, \kappa}(\xbf_{\cN_i^\kappa}, \ubf_{\cN_i^\kappa})$ is a good approximation of $Q^i_\Kbf(\xbf,\ubf)$. It also inspires us to obtain a good approximation of the exact policy gradient in Lemma~\ref{lem:policy gradient theorem}, as shown in the following lemma.
\begin{lemma}\label{lem:gradient_approximation}
Under the Exponential Decay Property as defined in Definition~\ref{def: Exponential Decay Property}, for any truncated $Q$ function in the form of (\ref{equ:truncatedrelativefunction}), the following holds

\quad (a) For any $(\xbf,\ubf) \in \cX \times \cU$,
\mequa
\left| Q_{\Kbf}^i(\xbf,\ubf)-\QQ^i_{\Kbf, \kappa}(\xbf_{\cN_i^\kappa}, \ubf_{\cN_i^\kappa}) \right| \leq \CC(\xbf,\ubf) \rho^{\kappa+1} .
\mequa

\quad (b) Define the following approximated policy gradient,
\mequa
\hat{\hbf}_i(\Kbf)= 
 \frac{1}{n}\EE_{\xbf\sim \nu_\Kbf, \ubf\sim  \pi_{\Kbf}(\cdot | \xbf)}\sum_{j\in \cN_i^\kappa} &\left[ \QQ^j_ {\Kbf, \kappa}(\xbf_{\cN_j^\kappa}, \ubf_{\cN_j^\kappa}) \right.\\
&  \left. \nabla_{\Kbf_i} \log \pi_{\Kbf_i}(\ubf_i|\xbf_{\cN_i^r}) \right] .
\mequa
where $\nu_\Kbf$ is the stationary distribution for the Markov chain in (\ref{equ: markov chain}). Assuming that $\EE_{\xbf\sim \nu_\Kbf, \ubf\sim  \pi_{\Kbf}(\cdot | \xbf)} \Vert \nabla_{\Kbf_i} \log \pi_{\Kbf_i}(\ubf_i|\xbf_{\cN_i^r}) \Vert \leq L_i$, it follows that $\Vert \hat{\hbf}_i(\Kbf)-\nabla_{\Kbf_i} C (\Kbf) \Vert \leq \CC L_i \rho^{\kappa+1}$, with $\CC$ being a positive constant.

\end{lemma}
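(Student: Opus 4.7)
The plan is to treat parts (a) and (b) in sequence, using the Exponential Decay Property as the only nontrivial input. For part (a), I would observe that, since $\zeta_i(\cdot\,;\xbf_{\cN_i^\kappa},\ubf_{\cN_i^\kappa})$ integrates to one in the outside variables, the quantity $Q_{\Kbf}^i(\xbf,\ubf)$ can itself be written as an integral against $\zeta_i$. Subtracting this from the defining integral of $\QQ^i_{\Kbf,\kappa}$ in \eqref{equ:truncatedrelativefunction} yields
\begin{equation*}
Q_{\Kbf}^i(\xbf,\ubf)-\QQ^i_{\Kbf,\kappa}(\xbf_{\cN_i^\kappa},\ubf_{\cN_i^\kappa})
=\int\!\!\int \zeta_i\cdot\!\bigl[\,Q_{\Kbf}^i(\xbf_{\cN_i^\kappa},\xbf_{\cN_{-i}^\kappa},\ubf_{\cN_i^\kappa},\ubf_{\cN_{-i}^\kappa}) - Q_{\Kbf}^i(\xbf_{\cN_i^\kappa},\xbf'_{\cN_{-i}^\kappa},\ubf_{\cN_i^\kappa},\ubf'_{\cN_{-i}^\kappa})\bigr]\,d\xbf'_{\cN_{-i}^\kappa}\,d\ubf'_{\cN_{-i}^\kappa}.
\end{equation*}
The two $Q^i_{\Kbf}$ values inside the brackets agree on $\cN_i^\kappa$, so the Exponential Decay Property bounds their difference by $\CC(\xbf,\ubf)\rho^{\kappa+1}$ uniformly in the dummy variables, and $\zeta_i$ integrates to one; the claim follows after a triangle inequality.

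For part (b), the starting point is Lemma~\ref{lem:policy gradient theorem} with the local decomposition $Q_{\Kbf}=\tfrac{1}{n}\sum_{j}Q_{\Kbf}^j$, which gives
\begin{equation*}
\nabla_{\Kbf_i}C(\Kbf)-\hat{\hbf}_i(\Kbf)
= \frac{1}{n}\sum_{j\in\cN_i^\kappa}\EE\bigl[(Q_{\Kbf}^j-\QQ^j_{\Kbf,\kappa})\,\nabla_{\Kbf_i}\log\pi_{\Kbf_i}(\ubf_i|\xbf_{\cN_i^r})\bigr] + \frac{1}{n}\sum_{j\notin\cN_i^\kappa}\EE\bigl[Q_{\Kbf}^j\,\nabla_{\Kbf_i}\log\pi_{\Kbf_i}(\ubf_i|\xbf_{\cN_i^r})\bigr].
\end{equation*}
The ``inside'' sum is handled directly by part (a): each term has norm at most $\CC\rho^{\kappa+1}L_i$ after pulling the scalar bound through the expectation and using the hypothesis on $\EE\|\nabla_{\Kbf_i}\log\pi_{\Kbf_i}\|$. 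Since $|\cN_i^\kappa|\le N_{\cG}^{\kappa}$, this contribution is $O(N_{\cG}^{\kappa}\CC\rho^{\kappa+1}L_i/n)$, absorbed into the stated $\CC$.

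The main obstacle, and the most delicate step, is the ``outside'' sum, because naive bounding gives $O(n)$ terms that would not decay. Here I would exploit the score-function (baseline) identity together with a symmetric use of the Exponential Decay Property applied to $Q_{\Kbf}^j$: since $\mathrm{dist}(i,j)>\kappa$ implies $i\notin\cN_j^\kappa$, there exists a function $\tilde Q^{j}(\xbf_{\cN_j^\kappa},\ubf_{\cN_j^\kappa})$ (for instance $\QQ^j_{\Kbf,\kappa}$ itself) with $|Q_{\Kbf}^j-\tilde Q^j|\le \CC\rho^{\kappa+1}$ that does \emph{not} depend on $\ubf_i$. Because the product policy factorizes as $\pi_{\Kbf}(\ubf|\xbf)=\prod_{l}\pi_{\Kbf_l}(\ubf_l|\xbf_{\cN_l^r})$, the actions $\ubf_{\cN_j^\kappa}$ are conditionally independent of $\ubf_i$ given $\xbf$, so
\begin{equation*}
\EE_{\ubf|\xbf}\bigl[\tilde Q^{j}(\xbf_{\cN_j^\kappa},\ubf_{\cN_j^\kappa})\,\nabla_{\Kbf_i}\log\pi_{\Kbf_i}(\ubf_i|\xbf_{\cN_i^r})\bigr]
= \EE_{\ubf_{\cN_j^\kappa}|\xbf}[\tilde Q^j]\cdot\EE_{\ubf_i|\xbf_{\cN_i^r}}[\nabla_{\Kbf_i}\log\pi_{\Kbf_i}]=0,
\end{equation*}
and only the residual $\EE[(Q_{\Kbf}^j-\tilde Q^j)\nabla_{\Kbf_i}\log\pi_{\Kbf_i}]$ survives, bounded by $\CC\rho^{\kappa+1}L_i$ per term. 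Summing the two sums, renaming the absolute constant, yields $\|\hat{\hbf}_i(\Kbf)-\nabla_{\Kbf_i}C(\Kbf)\|\le \CC L_i\rho^{\kappa+1}$, which is what we needed.
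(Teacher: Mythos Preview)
Your proposal is correct and follows essentially the same route as the paper's proof (which is deferred to the supplement and explicitly adapts the argument of Qu et al.\ \cite{qu2019exploiting,qu2020average,qu2020scalable}): part~(a) via the averaging trick against $\zeta_i$ and the Exponential Decay Property, and part~(b) via the inside/outside split, part~(a) on the inside terms, and the score-function (baseline) identity to kill the outside terms after replacing $Q_{\Kbf}^j$ by a $\ubf_i$-independent surrogate. The only cosmetic point is that the pointwise constant $\CC(\xbf,\ubf)$ from Definition~\ref{def: Exponential Decay Property} must be passed through the expectation together with $\|\nabla_{\Kbf_i}\log\pi_{\Kbf_i}\|$ before it becomes the absolute constant $\CC$ in the final bound; the paper handles this in the same way you do, by absorbing it into $\CC$.
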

\begin{proof}
Please refer the detail in Part A, Section \uppercase\expandafter{\romannumeral3} in~\cite{supplementary}. 
\end{proof}
However, as the LQR belongs to the family of average reward MDP, Exponential Decay Property does not hold universally~\cite{qu2020average}, as opposed to the tabular cases with discounted rewards~\cite{qu2020scalable}.  We will present the mild conditions for such a property to hold in Section~\ref{subsec:conditionsforexponential} and give some representative examples to show their feasibility. $\CC(\xbf,\ubf)$ is a function related to the system parameter $\Abf$, $\Bbf$, $\Qbf$, $\Rbf$, $\Vert \xbf \Vert_{\infty}$ and $\Vert \ubf \Vert_{\infty}$ and the explicit form is given in Section~\uppercase\expandafter{\romannumeral4} in~\cite{supplementary}. Such a powerful property allows us to approximate the exact global gradient locally and accurately in a large network system. 

\subsection{Distributed Policy Gradient}
\label{subsec:optimalguarantee}

\begin{algorithm}[tb]
   \caption{Distributed policy gradient descent}
   \label{alg:main}
\begin{algorithmic}
   \STATE {\bfseries Parameters:} step size $\eta$, communication range limit $\kappa$.
   \STATE {\bfseries Initialization:} Initial controller $\Kbf_1(0)$, $\Kbf_2(0)$, ..., $\Kbf_n(0)$.
   \FOR{step $t=1$ {\bfseries to} $T$}
   \FOR{agent $i=1$ {\bfseries to} $n$}
   \STATE \emph{policy evaluation}: calculate $\QQ^i_{\Kbf, \kappa}(\xbf_{\cN_i^\kappa}, \ubf_{\cN_i^\kappa})$ using the communication within $\kappa$-hop neighbors,
   \STATE \emph{gradient approximation}\footnotemark: calculate $\hat{\hbf}_i(t-1)$,
   \STATE \emph{policy improvement}: $$\Kbf_i(t)=\Kbf_i(t-1)-\eta \hat{\hbf}_i(t-1).$$ 
   \ENDFOR
   \ENDFOR
\end{algorithmic}
\end{algorithm}
\footnotetext{In this study, we employ a simple Monte-Carlo method for policy evaluation and gradient approximation, as detailed in Section~\uppercase\expandafter{\romannumeral8} in~\cite{supplementary}. This framework allows for the integration of alternative approaches, such as actor-critic methods. A comprehensive analysis of these methodologies is part of our planned future work.}

In Algorithm~\ref{alg:main}, we demonstrate a distributed policy gradient descent method. The next theorem shows that it can converge to a global near-optimal point, assuming that each agent has access to a local approximation $\hat{\hbf}_i(\Kbf)$ for the exact policy gradient $\nabla_{\Kbf_i} C(\Kbf)$.  For simplicity, we use $d$ to denote $\max(d_x, d_u)$ and $\mu$ to denote $\sigma_1(\bPsif)$.

\begin{figure*}
\begin{equation}
\label{condition:stepsizecondition}
\begin{aligned}
\eta < \min & \left\{  \underbrace{\frac{1}{16}(\frac{\sigma_1(\Qbf)\mu}{C(\Kbf)})^2\frac{1}{\Vert \Bbf \Vert \Vert \cP_{\cM^r}(\nabla_{\Kbf} C(\Kbf)) \Vert (1+\Vert \Abf - \Bbf \Kbf \Vert)} }_{\textit{To bound } \Vert \mathbf{\Xi}_{\Kbf'}-\mathbf{\Xi}_{\Kbf} \Vert},\,
\underbrace{\frac{\sigma_1(\Qbf)}{32C(\Kbf)\Vert \Rbf+\Bbf^\Tb \Pbf_\Kbf \Bbf \Vert}}_{\textit{To guarantee the convergence}},\right.\\
&\left.
\underbrace{\frac{ \sigma_1(\Qbf) \mu}{4 C(\Kbf) \Vert \Bbf \Vert (\Upsilon(C(\Kbf))+1) \CC \sqrt{d} \sum_i L_i \rho^{\kappa+1}}}_{\textit{To guarantee that } \Kbf'' \textit{is stabilizing} }, \underbrace{\frac{1}{\LL}}_{\textit{To guarantee that } \Kbf' \textit{is stabilizing}}, \,
\underbrace{\frac{-\varpi_1-\sqrt{{\varpi_1}^2-4 \varpi_1 \varpi_2}}{2\varpi_2}}_{C(\Kbf'')\leq C(\Kbf)}, \,1\right \}
\end{aligned}
\end{equation}

\vspace*{12pt}

\begin{equation}
\label{equ:bar L}
\begin{aligned}
\LL \coloneqq \left( 2\sigma_{n}(\Rbf) +  \frac{2\Vert \Bbf \Vert^2 C(\Kbf(0))}{\sigma_1(\bPsif)} + 4\sqrt{2} \zeta \Vert \Bbf \Vert \frac{C(\Kbf(0))}{\mu} \right) \frac{C(\Kbf(0))}{\sigma_{n}(\Qbf)}
\end{aligned}
\end{equation}
\hrulefill
\end{figure*}

\begin{theorem}
\label{theorem:main theorem}
Assuming that $\Kbf^*$ is the centralized optimal controller, if all agents conduct the policy update in Algorithm~\ref{alg:main}, and Exponential Decay Property holds during the process, then for an appropriate choice of the step-size $\eta$ that satisfies (\ref{condition:stepsizecondition}) and an adequate communication range $\kappa$ that,   
\begin{equation}\label{condition:kappa low limit}
\begin{aligned}
\kappa >\frac{1}{-\log{\rho}} \log{\frac{2\sqrt{d} C(\Kbf) \cE(C(\Kbf)) \CC \sum_i L_i}{\Vert \cP_{\cM^r}(\nabla_{\Kbf} C(\Kbf)) \Vert^2 \sigma_{1}(\Qbf)}}-1,
\end{aligned}
\end{equation}
given any arbitrarily small positive real number $\epsilon$, if we conduct the process for $T$ steps that,
\# \label{condition:T condition}
T \geq \frac{\Vert \mathbf{\Xi}_{\Kbf^*} \Vert}{\eta \mu^2 \sigma_{1}(\Rbf)} \log \frac{C(\Kbf(0))-C(\Kbf^*)}{\epsilon} ,
\#
the distributed gradient descent enjoys the following performance bound \footnote{The explicit forms of certain terms unmentioned so far ($\cE(\cdot)$, $\Upsilon(\cdot)$, $M_1 F_1(\cdot)$, $M_2 F_2(\cdot)$) are detailed in Section~\uppercase\expandafter{\romannumeral1} in~\cite{supplementary}.}
\begin{equation}\label{result:performance of C(K(T))}
\begin{aligned}
& C(\Kbf(T))-C(\Kbf^*) \leq  \epsilon + \\
& \frac{\Vert \mathbf{\Xi}_{\Kbf^*} \Vert}{\eta \mu^2 \sigma_{1}(\Rbf)} \big[M_1 F_1(C(\Kbf(0))) \rho^{\kappa+1} + M_2 F_2(C(\Kbf(0)))\rho^r\big].
\end{aligned}
\end{equation} 
\end{theorem}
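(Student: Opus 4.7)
The plan is to decompose the performance gap as
\[
C(\Kbf(T)) - C(\Kbf^*) = \underbrace{\big[C(\Kbf(T)) - C(\Kbf^*_r)\big]}_{\text{optimization gap inside } \cM^r} + \underbrace{\big[C(\Kbf^*_r) - C(\Kbf^*)\big]}_{\text{class-restriction gap}},
\]
where $\Kbf^*_r := \arg\min_{\Kbf \in \cM^r} C(\Kbf)$ denotes the best controller obeying the $r$-hop sparsity constraint. The first term is driven down by Algorithm~\ref{alg:main} and will inherit the bias introduced by the $\kappa$-truncated gradient, producing the $\rho^{\kappa+1}$ contribution in~(\ref{result:performance of C(K(T))}); the second is a purely structural property of the networked LQR problem and produces the $\rho^r$ contribution, via a spatial-decay argument on the centralized optimal controller in the spirit of~\cite{shin2022near,zhang2022optimal}.

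For the optimization gap, I would proceed by induction on $t$. First, I would show that every iterate $\Kbf(t)$ produced by Algorithm~\ref{alg:main} is stabilizing. Given a stabilizing $\Kbf(t)$, sufficiently small $\eta$ satisfying~(\ref{condition:stepsizecondition}) gives two things: (i) the exact projected step $\Kbf' = \Kbf(t) - \eta\,\cP_{\cM^r}(\nabla C(\Kbf(t)))$ remains stabilizing, by perturbing the Lyapunov equations for $\Pbf_\Kbf$ and $\mathbf{\Xi}_\Kbf$ (the role of the $1/\LL$ term and the first summand of~(\ref{condition:stepsizecondition})); and (ii) replacing $\cP_{\cM^r}(\nabla C(\Kbf(t)))$ by its truncated estimate $\hat{\hbf}(t)$, whose error is bounded by $\CC \sqrt{d} \sum_i L_i\, \rho^{\kappa+1}$ via Lemma~\ref{lem:gradient_approximation}, still keeps $\Kbf(t+1) = \Kbf''$ stabilizing (the role of the third summand). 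Together these two steps close the induction.

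Next I would derive a one-step quasi-descent inequality of the form
\[
C(\Kbf(t+1)) \leq C(\Kbf(t)) - \frac{\eta}{2}\,\Vert \cP_{\cM^r}(\nabla C(\Kbf(t)))\Vert^2 + \eta\, b_\kappa,
\]
where $b_\kappa$ is proportional to $\rho^{\kappa+1}$ and absorbs the gradient-approximation error, by expanding $C(\Kbf(t+1))$ to second order in $\eta$ and using a local smoothness bound (controlled by the ``to guarantee the convergence'' term of~(\ref{condition:stepsizecondition})). Combining this with the LQR gradient-dominance (PL-type) inequality
\[
C(\Kbf) - C(\Kbf^*_r) \leq \frac{\Vert \mathbf{\Xi}_{\Kbf^*}\Vert}{\mu^2 \sigma_1(\Rbf)}\,\Vert \cP_{\cM^r}(\nabla C(\Kbf))\Vert^2,
\]
adapted from the centralized version in~\cite{fazel2018global,yang2019provably}, produces the linear contraction
\[
C(\Kbf(t+1)) - C(\Kbf^*_r) \leq \Big(1 - \frac{\eta\, \mu^2 \sigma_1(\Rbf)}{2\Vert \mathbf{\Xi}_{\Kbf^*}\Vert}\Big)\big(C(\Kbf(t)) - C(\Kbf^*_r)\big) + \eta\, b_\kappa.
\]
Unrolling this contraction $T$ times with $T$ as in~(\ref{condition:T condition}) drives the contracted initial gap below $\epsilon$, while the accumulated bias forms a geometric series bounded by $\frac{\Vert \mathbf{\Xi}_{\Kbf^*}\Vert}{\eta\, \mu^2 \sigma_1(\Rbf)}\, b_\kappa$, matching the $M_1 F_1(\cdot)\,\rho^{\kappa+1}$ term in~(\ref{result:performance of C(K(T))}).

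Finally, for the class-restriction gap $C(\Kbf^*_r) - C(\Kbf^*)$, I would exploit the spatial-exponential-decay property of the centralized optimal controller: truncating $\Kbf^*$ onto the sparsity pattern $\cM^r$ yields a feasible controller whose cost differs from $C(\Kbf^*)$ by at most $O(\rho^r)$ (via a Lyapunov perturbation argument in the style of~\cite{shin2022near,zhang2022optimal}), so $C(\Kbf^*_r) - C(\Kbf^*) \leq M_2 F_2(\cdot)\,\rho^r$. The main obstacle I expect is the coupled stability-and-descent induction: since $\hat{\hbf}(t)$ is only $\rho^{\kappa+1}$-close to the exact projected gradient, preventing destabilization requires showing that the ``stability radius'' around each $\Kbf(t)$ stays uniformly bounded below along the entire trajectory. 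This is precisely what the lower bound~(\ref{condition:kappa low limit}) on $\kappa$ enforces: it forces the truncation error to be dominated by $\Vert \cP_{\cM^r}(\nabla C(\Kbf))\Vert^2$, so that the quasi-descent inequality never turns into an ascent that could push the iterate out of the stable set.
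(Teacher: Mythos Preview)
Your decomposition into an optimization gap toward $\Kbf^*_r$ plus a class-restriction gap $C(\Kbf^*_r)-C(\Kbf^*)$ is natural, but it hides a real obstacle. The projected gradient-dominance inequality you invoke,
\[
C(\Kbf) - C(\Kbf^*_r) \;\leq\; \frac{\Vert \mathbf{\Xi}_{\Kbf^*}\Vert}{\mu^2 \sigma_1(\Rbf)}\,\Vert \cP_{\cM^r}(\nabla C(\Kbf))\Vert^2,
\]
is \emph{not} a direct adaptation of the centralized PL inequality of~\cite{fazel2018global,yang2019provably}. The sparsity-constrained LQR problem over $\cM^r$ is genuinely nonconvex and is only known to admit convergence to first-order stationary points (cf.~\cite{bu2019lqr}); a global PL property inside $\cM^r$ is exactly what is missing from the literature and would need an independent argument. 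Without it, your recursion toward $C(\Kbf^*_r)$ does not close.

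The paper sidesteps this by never comparing to $\Kbf^*_r$. It inserts an additional intermediate point, the \emph{unprojected} step $\Kbf^h=\Kbf-\eta\,\nabla_{\Kbf} C(\Kbf)$, and writes the one-step decrease as
\[
C(\Kbf'')-C(\Kbf)=\big[C(\Kbf'')-C(\Kbf')\big]+\big[C(\Kbf')-C(\Kbf^h)\big]+\big[C(\Kbf^h)-C(\Kbf)\big].
\]
The last bracket is handled by the \emph{centralized} PL inequality (which is known) and yields contraction directly toward $C(\Kbf^*)$, not $C(\Kbf^*_r)$. The $\rho^r$ contribution then arises \emph{per step} from the middle bracket $C(\Kbf')-C(\Kbf^h)$, bounded via the SED property of the \emph{gradient} $\nabla_{\Kbf} C(\Kbf)$ (so that $\Vert \nabla_{\Kbf} C(\Kbf)-\cP_{\cM^r}(\nabla_{\Kbf} C(\Kbf))\Vert=O(\rho^r)$), rather than from a one-shot truncation of $\Kbf^*$ at the end. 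This is why both the $\rho^{\kappa+1}$ and the $\rho^r$ terms in~(\ref{result:performance of C(K(T))}) carry the same prefactor $\Vert \mathbf{\Xi}_{\Kbf^*}\Vert/(\eta\mu^2\sigma_1(\Rbf))$: they are both accumulated biases in the same geometric recursion, whereas your route would produce an asymmetric bound. Your stability induction and the role you assign to~(\ref{condition:kappa low limit}) are essentially correct and match the paper; the gap is specifically in how the $\rho^r$ term is generated.
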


In the following text, we demonstrate the proof process step by step. \footnote{Due to the limitation of the length of the main text, some lemmas and corollary from previous works will be directly cited and used. The complete and coherent proof can be found in Section~\uppercase\expandafter{\romannumeral2} in~\cite{supplementary}.}
The one-step exact gradient descent (gradient without approximation error) is,
\begin{equation}\label{equ:def of K'}
\Kbf'=\Kbf-\eta \cP_{\cM^r}(\nabla_{\Kbf} C(\Kbf)),
\end{equation}
where $\cP_{\cM^r}$ denotes the projection on the sub-space $\cM^r$ (Definition~\ref{def:Mr}). The one-step distributed gradient descent in Algorithm~\ref{alg:main} is,
\begin{equation}\label{equ:def of K''}
\Kbf_i''=\Kbf_i - \eta \hat{\hbf}_i(\Kbf) .
\end{equation}
We use $\Kbf''\in \RR^{d_u \times d_x} \cap \cM^r$ to denote the controller matrix consisting of all $\Kbf_i''\in \RR^{d_u^i \times d_x^i}$ (like Fig.~\ref{fig:figure_diagram1}). Our proof mainly consists of four parts. 
\begin{enumerate}
    \item[1)] To prove that with a stabilizing $\Kbf$ and an appropriate choice of step size $\eta$, $\Kbf'$ is stabilizing.
    \item[2)] To prove that with a stabilizing $\Kbf'$ and an appropriate choice of step size $\eta$, $\Kbf''$ is stabilizing.
    \item[3)] To prove that with a stabilizing $\Kbf$, an appropriate choice of step size $\eta$ and an adequate communication range $\kappa$, $C(\Kbf'')\leq C(\Kbf)$.
    \item[4)] To prove that the generated controller $\Kbf(T)$ converges to the point close to the centralized optimal controller $\Kbf^*$ but suffers from the degradation exponentially small in $\kappa$ and $r$. 
\end{enumerate}

Each part leads to one or several restrictions on the choice of the step size $\eta$, which is shown in (\ref{condition:stepsizecondition}). We will explain the meaning and the insight for each term in the following text.  It's important to acknowledge that while the step-size condition outlined in (\ref{condition:stepsizecondition}) appears to be contingent upon $\Kbf$ and varies over time with each update iteration, it is feasible to establish a universal lower bound through the principle of monotonicity. We leave the discussion to Section~\uppercase\expandafter{\romannumeral2}.C to E in the supplementary material~\cite{supplementary}.

\subsection{Stability and Descent Guarantee}
\label{subsec:Stabilizability and Descent Guarantee}

This subsection presents the generated controllers' stability and the descent guarantee of the objective function (Step 1-3). 

First, we illustrate that with an  appropriate  choice of $\eta$ and starting from a stabilizing controller $\Kbf$, $\Kbf'$ obtained by the one-step exact policy gradient descent (\ref{equ:def of K'}) is stabilizing as well. 

\begin{corollary} 
\label{corollary:K' Stabilizing}
If $\Kbf$ is stabilizing, the sub-level set $S_{C(\Kbf)}$, which is defined as $\{ \Kbf': C(\Kbf')<C(\Kbf), \rho(\Abf - \Bbf \Kbf') <1, \Kbf'\in \cM^r \}$,  is compact, and the following conclusion hold:
There exists a constant $\LL$ which is determined by $\Abf$, $\Bbf$, $\Qbf$. $\Rbf$, $\bPsif$ and $C(\Kbf(0))$ that $C(\Kbf)$ is a $\LL$-smooth function in the projection space $\cM^r$, and if we choose a step size $\eta \leq 1/\LL$, it is guaranteed that $\Kbf' = \Kbf-\eta \cP_{\cM^r}(\nabla_{\Kbf} C(\Kbf)) $ stays in the stabilizing set $S_{C(\Kbf)}$. (\cite{bu2019lqr}, Lemma 7.4 $\&$ 7.9)
\end{corollary}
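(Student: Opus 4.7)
The corollary is essentially a projected-space adaptation of the centralized LQR smoothness analysis of Bu et al.\ (2019), so my plan is to reduce each of the three claims to an unconstrained counterpart and then check that the projection $\cP_{\cM^r}$ does not destroy the relevant property. For compactness of $S_{C(\Kbf)}$, I would first argue coercivity: using $\nabla_{\Kbf} C(\Kbf)$ from Lemma~\ref{lem:policy gradient theorem} together with the Lyapunov identities~\eqref{equ:definition of P_K in Bellman equation} and~\eqref{sigma_K}, one has $C(\Kbf') \ge \sigma_1(\bPsif)\,\tr(\Qbf + {\Kbf'}^\Tb \Rbf \Kbf')$, so $C(\Kbf') \le C(\Kbf)$ bounds $\|\Kbf'\|_F$. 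The stabilizing region $\{\Kbf' : \rho(\Abf - \Bbf\Kbf')<1\}$ is open, but $C(\cdot)$ blows up as one approaches its boundary (since $\mathbf{\Xi}_{\Kbf'}$ diverges), so the sub-level set stays strictly inside; intersecting with the closed linear subspace $\cM^r$ preserves closedness, giving compactness.

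Next, for $\LL$-smoothness on $\cM^r$, I would start from the explicit gradient
\[
\nabla_{\Kbf} C(\Kbf) = 2\bigl[(\Rbf+\Bbf^\Tb \Pbf_\Kbf \Bbf)\Kbf - \Bbf^\Tb \Pbf_\Kbf \Abf\bigr]\mathbf{\Xi}_\Kbf
\]
and establish Lipschitz continuity of $\Kbf \mapsto \Pbf_\Kbf$ and $\Kbf \mapsto \mathbf{\Xi}_\Kbf$ on the compact set $S_{C(\Kbf(0))}$. These come from perturbing the two Lyapunov equations and bounding the resulting operators in terms of $\|\Abf-\Bbf\Kbf\|$, $\|\Bbf\|$, and $\|\Pbf_\Kbf\|,\|\mathbf{\Xi}_\Kbf\|$, all of which are uniformly controlled on the compact sub-level set by quantities depending only on the stated data plus $C(\Kbf(0))$. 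Combining these Lipschitz bounds in the product rule yields a Hessian bound and hence an $\LL$-smoothness constant of the explicit form in~\eqref{equ:bar L}. The restriction to $\cM^r$ is immediate: since $\cP_{\cM^r}$ is the orthogonal projection onto a linear subspace, the restricted function inherits the same smoothness constant (and its gradient on $\cM^r$ is $\cP_{\cM^r}(\nabla_{\Kbf} C(\Kbf))$).

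For the final claim that $\Kbf' \in S_{C(\Kbf)}$ when $\eta \le 1/\LL$, I would invoke the standard descent lemma in the subspace $\cM^r$:
\[
C(\Kbf') \le C(\Kbf) - \eta\bigl(1 - \tfrac{\eta \LL}{2}\bigr)\,\|\cP_{\cM^r}(\nabla_{\Kbf} C(\Kbf))\|^2,
\]
which gives $C(\Kbf') \le C(\Kbf)$ for $\eta \le 1/\LL$. However, this descent inequality only makes sense along a line segment on which $C$ is defined (i.e.\ the intermediate iterates are stabilizing). The cleanest way is a connectedness argument: consider the segment $\Kbf(\tau) = \Kbf - \tau\,\cP_{\cM^r}(\nabla_{\Kbf}C(\Kbf))$ for $\tau \in [0,\eta]$, and let $\tau^* \in (0,\eta]$ be the supremum of $\tau$ such that the whole sub-segment lies in the open stabilizing region. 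On $[0,\tau^*)$ the smoothness bound applies and keeps $C$ bounded by $C(\Kbf)$, so by coercivity $\Kbf(\tau)$ stays in a compact subset of the stabilizing region; hence $\tau^* = \eta$, and the endpoint $\Kbf' = \Kbf(\eta)$ is stabilizing with $C(\Kbf') \le C(\Kbf)$, i.e.\ $\Kbf' \in S_{C(\Kbf)}$.

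The main obstacle, and the reason I would refer to Bu et al.\ (2019) Lemma~7.4/7.9 for bookkeeping, is getting explicit, uniform bounds on $\|\Pbf_\Kbf\|$, $\|\mathbf{\Xi}_\Kbf\|$, and their perturbation Lipschitz constants purely in terms of $C(\Kbf(0))$ and the problem data, which is what produces the closed-form constant $\LL$ in~\eqref{equ:bar L}. Once these perturbation estimates are in hand, verifying that projecting onto $\cM^r$ does not enlarge the smoothness constant and running the descent/connectedness argument above are routine.
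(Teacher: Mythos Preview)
Your proposal is correct and essentially follows the approach of Bu et al.\ (2019), which is precisely what the paper does: the paper does not provide an independent proof of this corollary but simply cites Lemmas~7.4 and~7.9 of \cite{bu2019lqr} and remarks that coercivity of $C(\Kbf)$ is the key ingredient that yields the explicit $\LL$ in~\eqref{equ:bar L}. Your sketch (coercivity $\Rightarrow$ compactness of sub-level sets; Lyapunov perturbation bounds on $\Pbf_\Kbf,\mathbf{\Xi}_\Kbf$ over the compact set $\Rightarrow$ uniform $\LL$-smoothness; descent lemma plus a segment/connectedness argument $\Rightarrow$ $\Kbf'\in S_{C(\Kbf)}$) is exactly the structure of that reference, and your observation that the orthogonal projection onto the linear subspace $\cM^r$ cannot enlarge the smoothness constant is the only additional point needed beyond the centralized case.
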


This corollary guarantees the stability of $\Kbf'$ given a stabilizing $\Kbf$. Note that the non-convexity of the problem introduces additional complications for determining the explicit form of $\LL$. However, it has been shown in a previous work~\cite{bu2019lqr} that an internal property (coerciveness) of $C(\Kbf)$ remedies the complication. The explicit form of $\LL$ is provided in (\ref{equ:bar L}). Guarantee on the stability of $\Kbf'$ leads to the fourth term in (\ref{condition:stepsizecondition}).

Based on the stabilizing $\Kbf'$, we then show the stability of $\Kbf''$. Conclusively speaking, if $\Vert \Kbf'' - \Kbf' \Vert$ is small enough, $\Kbf''$ is also stabilizing. By the localized gradient approximation, we have,
\# \label{equ:difference between K'' and K'}
\Vert \Kbf''-\Kbf' \Vert = \eta \big\Vert \sum_{i=1}^n \nabla_{\Kbf_i} C(\Kbf) - \hat{\hbf}_i(\Kbf) \big\Vert ,
\#
where Exponential Decay Property can be utilized to measure the gap between $\nabla_{\Kbf_i} C(\Kbf)$ and $\hat{\hbf}_i(\Kbf)$. Therefore, the difference $\Vert \Kbf''-\Kbf' \Vert$ is determined by two factors: step size $\eta$ and the communication range $\kappa$, which will be clarified in the following corollary. 
\begin{corollary}
\label{corollary: Kbf'' stabilizing}
If it is ensured that 
$$\eta < \frac{ \sigma_1(\Qbf) \mu}{4 C(\Kbf) \Vert \Bbf \Vert (\Upsilon(C(\Kbf))+1) \CC \sqrt{d} \sum_i L_i \rho^{\kappa+1}},$$ 
with a stabilizing $\Kbf'$ and (\ref{equ:difference between K'' and K'}), we can claim that $\Kbf''$ is stabilizing as well. 
\end{corollary}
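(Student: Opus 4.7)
The approach I would take combines the local gradient-approximation error bound from Lemma~\ref{lem:gradient_approximation}(b) with a discrete-time Lyapunov perturbation argument. The step-size hypothesis forces $\Kbf''$ to live in a small ball around the stabilizing $\Kbf'$, and in that ball the positive-definite solution $\mathbf{\Xi}_{\Kbf'}$ of the Lyapunov equation (\ref{sigma_K}) continues to serve as a Lyapunov certificate for the perturbed closed loop $\Abf - \Bbf\Kbf''$, whence $\rho(\Abf-\Bbf\Kbf'') < 1$.

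\emph{Step 1 (deviation bound).} From (\ref{equ:def of K'}) and (\ref{equ:def of K''}), the $i$-th block of $\Kbf'' - \Kbf'$ equals $-\eta(\hat{\hbf}_i(\Kbf) - \nabla_{\Kbf_i}C(\Kbf))$. Applying Lemma~\ref{lem:gradient_approximation}(b) block-by-block and paying a dimensional factor $\sqrt{d}$ to pass from a block-wise Frobenius-style bound to the induced $\ell_2$ norm on the global matrix would give
$$\|\Kbf''-\Kbf'\| \leq \eta\,\CC\sqrt{d}\sum_{i=1}^n L_i\,\rho^{\kappa+1}.$$
Plugging in the step-size hypothesis then yields the working bound
$$\|\Kbf''-\Kbf'\| < \frac{\sigma_1(\Qbf)\mu}{4\,C(\Kbf)\,\|\Bbf\|\,(\Upsilon(C(\Kbf))+1)}.$$

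\emph{Step 2 (Lyapunov perturbation).} Writing $\Abf-\Bbf\Kbf'' = (\Abf-\Bbf\Kbf') - \Bbf(\Kbf''-\Kbf')$ and using the Lyapunov identity $\mathbf{\Xi}_{\Kbf'} = \bPsif + (\Abf-\Bbf\Kbf')\mathbf{\Xi}_{\Kbf'}(\Abf-\Bbf\Kbf')^{\Tb}$, I would verify that
\begin{align*}
&\mathbf{\Xi}_{\Kbf'} - (\Abf-\Bbf\Kbf'')\mathbf{\Xi}_{\Kbf'}(\Abf-\Bbf\Kbf'')^{\Tb} \\
&= \bPsif + (\Abf-\Bbf\Kbf')\mathbf{\Xi}_{\Kbf'}(\Kbf''-\Kbf')^{\Tb}\Bbf^{\Tb} \\
&\quad + \Bbf(\Kbf''-\Kbf')\mathbf{\Xi}_{\Kbf'}(\Abf-\Bbf\Kbf')^{\Tb} \\
&\quad - \Bbf(\Kbf''-\Kbf')\mathbf{\Xi}_{\Kbf'}(\Kbf''-\Kbf')^{\Tb}\Bbf^{\Tb}.
\end{align*}
Using $\bPsif \succeq \mu\bI$, the coercive bound $\|\mathbf{\Xi}_{\Kbf'}\| \leq C(\Kbf')/\sigma_1(\Qbf) \leq C(\Kbf)/\sigma_1(\Qbf)$ (coming from $C(\Kbf') \geq \sigma_1(\Qbf)\,\mathrm{tr}(\mathbf{\Xi}_{\Kbf'})$), the uniform estimate $\|\Abf-\Bbf\Kbf'\| \leq \Upsilon(C(\Kbf))$ on the sub-level set $S_{C(\Kbf)}$ of Corollary~\ref{corollary:K' Stabilizing}, and the Step~1 bound on $\|\Kbf''-\Kbf'\|$, the two cross terms and the quadratic term are each dominated by a constant fraction of $\mu\bI$, so that the right-hand side remains strictly positive definite. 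The standard converse discrete-time Lyapunov theorem then gives $\rho(\Abf-\Bbf\Kbf'') < 1$, i.e., $\Kbf''$ is stabilizing, and in fact $\mathbf{\Xi}_{\Kbf'}$ certifies it.

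\emph{Main obstacle.} The delicate part is matching constants in Step~2 so that the resulting sufficient condition reduces exactly to the form in the corollary---in particular, recovering the coefficient $4\,C(\Kbf)\|\Bbf\|(\Upsilon(C(\Kbf))+1)/[\sigma_1(\Qbf)\mu]$. Doing so requires simultaneously bounding $\|\mathbf{\Xi}_{\Kbf'}\|$ via $C(\Kbf)/\sigma_1(\Qbf)$ and $\|\Abf-\Bbf\Kbf'\|$ via $\Upsilon(C(\Kbf))$, while reserving enough of the $\bPsif \succeq \mu\bI$ slack so that the quadratic cross-term $\Bbf(\Kbf''-\Kbf')\mathbf{\Xi}_{\Kbf'}(\Kbf''-\Kbf')^{\Tb}\Bbf^{\Tb}$ is dominated; chaining these with the coerciveness/monotonicity machinery inherited from \cite{bu2019lqr}, so that every $\Kbf'$-dependent quantity is uniformly controlled by $C(\Kbf)$ alone, is the principal bookkeeping burden. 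Once this is arranged, the positivity of the Lyapunov residual---and hence the stabilizing property of $\Kbf''$---is immediate.
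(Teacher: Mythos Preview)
Your Step~1 is exactly the paper's: it bounds $\|\Kbf''-\Kbf'\|$ via Lemma~\ref{lem:gradient_approximation}(b) and the Frobenius-to-operator norm passage, then plugs in the step-size hypothesis to get the same working bound $\|\Kbf''-\Kbf'\| < \sigma_1(\Qbf)\mu / [4C(\Kbf)\|\Bbf\|(\Upsilon(C(\Kbf))+1)]$.

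Your Step~2 diverges in presentation but not in substance. The paper does not carry out the Lyapunov perturbation directly; instead, after establishing $\|\Abf-\Bbf\Kbf'\| \leq \Upsilon(C(\Kbf))$ (your same estimate), it simply invokes Lemma~16 of \cite{fazel2018global} as a black box: that lemma says that if $\Kbf'$ is stabilizing and $\|\Kbf''-\Kbf'\|$ is bounded by the displayed quantity with $\|\Abf-\Bbf\Kbf'\|$ in place of $\Upsilon(C(\Kbf))$, then $\Kbf''$ is stabilizing (and moreover $\|\mathbf{\Xi}_{\Kbf''}-\mathbf{\Xi}_{\Kbf'}\|$ is controlled). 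Your direct argument---showing that $\mathbf{\Xi}_{\Kbf'}$ remains a Lyapunov certificate for $\Abf-\Bbf\Kbf''$ by bounding the cross and quadratic terms against $\bPsif\succeq\mu\bI$---is essentially the proof of that cited lemma unpacked in place. The constant-matching you flag as the ``main obstacle'' is precisely what the factor $4C(\Kbf)\|\Bbf\|(\Upsilon+1)/[\sigma_1(\Qbf)\mu]$ is engineered to make work, and indeed the cross terms contribute at most $\Upsilon\mu/[2(\Upsilon+1)]<\mu/2$ while the quadratic term is smaller still (using $C(\Kbf)\geq\sigma_1(\Qbf)\mu$). So your route is sound and more self-contained; the paper's route is shorter because it outsources the perturbation step to \cite{fazel2018global}.
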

\begin{proof}
See Appendix~\ref{appen:proof of corollary Kbf'' stabilizing}.
\end{proof}

Knowing that $\Kbf'$ and $\Kbf''$ are both stabilizing, here we further provide the descent guarantee of the objective function $C(\Kbf)$.
Note that the stability of $\Kbf''$ is not equivalent to the fact that the controller's performance is guaranteed to be improved by each iteration: If $\Kbf''$ is stabilizing, $C(\Kbf'')$ is finite, but $C(\Kbf'')\leq C(\Kbf)$ is not guaranteed. We illustrate that the communication range limit $\kappa$ needs to meet a theoretically lower limit to ensure performance improvement. To see this, we keep on decomposing the one-step change of objective function into two parts,
\mequa
C(\Kbf'')-C(\Kbf)=[C(\Kbf'')-C(\Kbf')]+[C(\Kbf')-C(\Kbf)].
\mequa
The first term represents the approximation error introduced by the local approximated gradient, and the second term represents the descent brought up by the exact gradient. Next, we analyze them separately. 

\begin{corollary}
\label{corollary:diff between C(K') and C(K)}
For $C(\Kbf')-C(\Kbf)$, following the descent lemma for an $\LL$-smooth function~\cite{beck2017first}, if the step size $\eta$ is smaller than $2/\LL$, the controller's performance by the one-step exact gradient descent enjoys the following improvement: 
\mequa
C(\Kbf')-C(\Kbf) \leq -(\eta-\frac{\LL}{2}\eta^2) \big\Vert \cP_{\cM^r}(\nabla_{\Kbf} C(\Kbf)) \big\Vert^2.
\mequa
\end{corollary}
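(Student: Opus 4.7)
The plan is to apply the standard descent lemma for an $\LL$-smooth function to the one-step projected gradient update $\Kbf' = \Kbf - \eta\, \cP_{\cM^r}(\nabla_{\Kbf} C(\Kbf))$. Corollary~\ref{corollary:K' Stabilizing} already gives us two ingredients for free: first, $C(\Kbf)$ is $\LL$-smooth on the sub-level set $S_{C(\Kbf)}$ inside the projection space $\cM^r$; second, provided $\eta \leq 1/\LL$ (which is implied by the stated condition $\eta < 2/\LL$ when combined with the bounds in (\ref{condition:stepsizecondition})), $\Kbf'$ stays within $S_{C(\Kbf)}$, so the smoothness inequality is legitimately applicable on the segment joining $\Kbf$ and $\Kbf'$.

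Concretely, I would invoke the descent lemma in the form
\mequa
C(\Kbf') \leq C(\Kbf) + \langle \nabla_{\Kbf} C(\Kbf),\, \Kbf' - \Kbf \rangle + \tfrac{\LL}{2}\Vert \Kbf' - \Kbf \Vert^2,
\mequa
and substitute the update rule $\Kbf' - \Kbf = -\eta\, \cP_{\cM^r}(\nabla_{\Kbf} C(\Kbf))$. The quadratic term immediately becomes $\tfrac{\LL}{2}\eta^2 \Vert \cP_{\cM^r}(\nabla_{\Kbf} C(\Kbf)) \Vert^2$. The key manipulation is on the linear term: since $\cP_{\cM^r}$ is an orthogonal projection onto a linear subspace (defined by the sparsity pattern in Definition~\ref{def:Mr}), it is self-adjoint and idempotent with respect to the Frobenius inner product, and therefore
\mequa
\langle \nabla_{\Kbf} C(\Kbf),\, \cP_{\cM^r}(\nabla_{\Kbf} C(\Kbf)) \rangle = \langle \cP_{\cM^r}(\nabla_{\Kbf} C(\Kbf)),\, \cP_{\cM^r}(\nabla_{\Kbf} C(\Kbf)) \rangle = \Vert \cP_{\cM^r}(\nabla_{\Kbf} C(\Kbf)) \Vert^2.
\mequa
Multiplying by $-\eta$ and combining with the quadratic term yields exactly $-(\eta - \tfrac{\LL}{2}\eta^2)\Vert \cP_{\cM^r}(\nabla_{\Kbf} C(\Kbf)) \Vert^2$, which is the claimed inequality.

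There is no real obstacle in this argument — it is a textbook application once the smoothness constant $\LL$ is in hand. The only point that deserves a sentence of justification is that the descent lemma is being applied in the \emph{projected} subspace $\cM^r$ rather than the ambient space $\RR^{d_u \times d_x}$; the self-adjointness of $\cP_{\cM^r}$ is what lets the ambient gradient appear in the bound while the projected gradient governs the actual progress. The requirement $\eta < 2/\LL$ is simply what is needed to ensure the coefficient $\eta - \tfrac{\LL}{2}\eta^2$ is non-negative, so the bound certifies a genuine decrease whenever the projected gradient is non-zero.
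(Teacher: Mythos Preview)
Your proposal is correct and follows essentially the same approach as the paper: apply the descent lemma to the projected update $\Kbf' = \Kbf - \eta\,\cP_{\cM^r}(\nabla_{\Kbf} C(\Kbf))$ and simplify. The only cosmetic difference is that the paper writes the linear term directly as $\langle \cP_{\cM^r}\nabla C(\Kbf),\,\Kbf'-\Kbf\rangle$ (invoking $\LL$-smoothness within $\cM^r$), whereas you start from the ambient inner product $\langle \nabla_{\Kbf} C(\Kbf),\,\Kbf'-\Kbf\rangle$ and then use self-adjointness and idempotence of $\cP_{\cM^r}$ to reach the same quantity.
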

\begin{proof}
The gradient descent update gives $\Kbf'=\Kbf-\eta \cP_{\cM^r} \nabla C(\Kbf)$, and we have,
\mequa
C(\Kbf')\leq & C(\Kbf)+\langle \cP_{\cM^r} \nabla C(\Kbf), \Kbf'-\Kbf \rangle + \frac{\LL}{2} \Vert \Kbf'-\Kbf \Vert^2 \\
= & C(\Kbf)+\langle \cP_{\cM^r} \nabla C(\Kbf), -\eta \cP_{\cM^r} \nabla C(\Kbf) \rangle \\
& + \frac{\LL}{2} \Vert \eta \cP_{\cM^r} \nabla C(\Kbf) \Vert^2 \\
= & C(\Kbf)-(\eta - \frac{\LL}{2}\eta^2) \Vert \cP_{\cM^r} \nabla C(\Kbf) \Vert^2 \leq C(\Kbf).
\mequa
\end{proof}

For $C(\Kbf'')-C(\Kbf')$, we prove that the $C(\Kbf'')-C(\Kbf')$ can be bounded by a polynomial linear combination of $\eta \rho^{\kappa+1}$, $(\eta \rho^{\kappa+1})^2$ and $(\eta \rho^{\kappa+1})^3$. 
\begin{corollary}
\label{corollary:diff between C(K'') and C(K')}
We can bound the difference between $C(\Kbf'')$ and $C(\Kbf')$ by a polynomial:
\begin{equation} \label{appen:equ: C(K'')-C(K')}
\begin{aligned} 
C(\Kbf'')-C(\Kbf') \leq & \eta f_1^1(C(\Kbf)) \rho^{\kappa+1} + \eta^2 f_1^2(C(\Kbf)) \rho^{2(\kappa+1)} \\
& + \eta^3 f_1^3(C(\Kbf)) \rho^{3(\kappa+1)} ,
\end{aligned}
\end{equation}
the explict forms of $f_1^1(C(\Kbf))$, $f_1^2(C(\Kbf))$ and $f_1^3(C(\Kbf))$ can be found in Section~\uppercase\expandafter{\romannumeral1}.A in~\cite{supplementary}.
\end{corollary}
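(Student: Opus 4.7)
My approach is to apply the LQR cost-difference identity to the two stabilizing controllers $\Kbf'$ and $\Kbf''$, upper bound their separation $\Vert \Kbf''-\Kbf' \Vert$ via the Exponential Decay Property, and then propagate the dependence on this separation through every factor appearing in the identity. The three powers $\eta\rho^{\kappa+1}$, $\eta^2\rho^{2(\kappa+1)}$, and $\eta^3\rho^{3(\kappa+1)}$ will arise naturally as the linear, quadratic, and cubic terms in $\Vert \Kbf''-\Kbf' \Vert$.

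First I would invoke the standard LQR cost-difference formula (see \cite{fazel2018global,bu2019lqr}): for any two stabilizing controllers $\Kbf_a,\Kbf_b$ with $\Delta := \Kbf_b - \Kbf_a$,
\begin{equation}
C(\Kbf_b)-C(\Kbf_a) = 2\tr\!\bigl(\mathbf{\Xi}_{\Kbf_b}\,\Delta^\Tb \mathbf{E}(\Kbf_a)\bigr) + \tr\!\bigl(\mathbf{\Xi}_{\Kbf_b}\,\Delta^\Tb (\Rbf+\Bbf^\Tb \Pbf_{\Kbf_a} \Bbf)\Delta\bigr),
\end{equation}
where $\mathbf{E}(\Kbf_a):=(\Rbf+\Bbf^\Tb \Pbf_{\Kbf_a} \Bbf)\Kbf_a - \Bbf^\Tb \Pbf_{\Kbf_a}\Abf$ is (up to a factor $2\mathbf{\Xi}_{\Kbf_a}$) the exact policy gradient at $\Kbf_a$ by Lemma~\ref{lem:policy gradient theorem}. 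Setting $\Kbf_a=\Kbf'$, $\Kbf_b=\Kbf''$ turns the identity into an exact expression in the perturbation $\Delta=\Kbf''-\Kbf'$, which is what I must bound.

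Next I would estimate $\Vert \Delta \Vert$. By (\ref{equ:def of K'})--(\ref{equ:def of K''}), $\Delta=-\eta\bigl(\hat{\hbf}(\Kbf)-\cP_{\cM^r}(\nabla_{\Kbf} C(\Kbf))\bigr)$, and since each local block $\hat{\hbf}_i(\Kbf)-\nabla_{\Kbf_i} C(\Kbf)$ already lies in $\cM_i^r$, applying Lemma~\ref{lem:gradient_approximation}(b) block-wise yields $\Vert \Delta \Vert \leq \eta\,\CC\sqrt{d}\sum_i L_i\,\rho^{\kappa+1}$, matching the quantity that already appears in (\ref{condition:stepsizecondition}). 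In parallel I would write $\mathbf{\Xi}_{\Kbf''}=\mathbf{\Xi}_{\Kbf'}+(\mathbf{\Xi}_{\Kbf''}-\mathbf{\Xi}_{\Kbf'})$ and bound the Lyapunov perturbation as $\Vert \mathbf{\Xi}_{\Kbf''}-\mathbf{\Xi}_{\Kbf'}\Vert\leq g(C(\Kbf'))\Vert \Delta \Vert$ using a standard fixed-point argument on the difference of (\ref{sigma_K}) for $\Kbf'$ and $\Kbf''$. Plugging these into the cost-difference identity and collecting orders, the term $2\tr(\mathbf{\Xi}_{\Kbf'}\Delta^\Tb \mathbf{E}(\Kbf'))$ gives an $O(\Vert\Delta\Vert)$ contribution that becomes $f_1^1(C(\Kbf))\,\eta\rho^{\kappa+1}$; the cross term $2\tr((\mathbf{\Xi}_{\Kbf''}-\mathbf{\Xi}_{\Kbf'})\Delta^\Tb \mathbf{E}(\Kbf'))$ together with $\tr(\mathbf{\Xi}_{\Kbf'}\Delta^\Tb (\Rbf+\Bbf^\Tb \Pbf_{\Kbf'}\Bbf)\Delta)$ gives the $\eta^2\rho^{2(\kappa+1)}$ piece $f_1^2(C(\Kbf))$; and the remainder $\tr((\mathbf{\Xi}_{\Kbf''}-\mathbf{\Xi}_{\Kbf'})\Delta^\Tb(\Rbf+\Bbf^\Tb \Pbf_{\Kbf'}\Bbf)\Delta)$ gives the cubic $f_1^3(C(\Kbf))\,\eta^3\rho^{3(\kappa+1)}$. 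Since Corollary~\ref{corollary:diff between C(K') and C(K)} gives $C(\Kbf')\leq C(\Kbf)$ and by coerciveness of $C$ \cite{bu2019lqr} each of $\Vert \Pbf_{\Kbf'}\Vert$, $\Vert \mathbf{\Xi}_{\Kbf'}\Vert$, $\Vert \mathbf{E}(\Kbf')\Vert$, $g(C(\Kbf'))$ is a monotone function of its argument, every coefficient can be re-expressed as a function of $C(\Kbf)$ alone.

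The hard part will be obtaining the Lyapunov-perturbation factor $g(C(\Kbf'))$ with an \emph{explicit, monotone} dependence on $C(\Kbf)$, rather than on the a priori unknown $C(\Kbf'')$. To close this I would rely on the stability guarantee in Corollary~\ref{corollary: Kbf'' stabilizing} together with the third restriction in (\ref{condition:stepsizecondition}), which confines $\Kbf''$ to the compact sub-level set $S_{C(\Kbf)}$ of Corollary~\ref{corollary:K' Stabilizing}; on this compact set all spectral-radius, Lyapunov, and gradient-related quantities admit uniform bounds expressible as continuous functions of $C(\Kbf)$, which is exactly what is needed to absorb them into $f_1^1,f_1^2,f_1^3$ and obtain the stated polynomial in $\eta\rho^{\kappa+1}$.
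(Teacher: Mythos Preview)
Your proposal is correct and follows essentially the same route as the paper: the paper applies the cost-difference (advantage) lemma from \cite{fazel2018global} to $\Kbf',\Kbf''$, splits $\mathbf{\Xi}_{\Kbf''}=\mathbf{\Xi}_{\Kbf'}+(\mathbf{\Xi}_{\Kbf''}-\mathbf{\Xi}_{\Kbf'})$ to obtain exactly the four trace terms you list, bounds $\Vert \Kbf''-\Kbf'\Vert$ by $\eta\CC\sqrt{d}\sum_i L_i\rho^{\kappa+1}$ via Lemma~\ref{lem:gradient_approximation}(b), bounds $\Vert \mathbf{\Xi}_{\Kbf''}-\mathbf{\Xi}_{\Kbf'}\Vert$ by a $C(\Kbf)$-dependent multiple of $\Vert \Kbf''-\Kbf'\Vert$ via Lemma~16 of \cite{fazel2018global}, and then upper-bounds $\Vert \mathbf{\Xi}_{\Kbf'}\Vert$, $\Vert \Ebf_{\Kbf'}\Vert$, $\Vert \Pbf_{\Kbf'}\Vert$ by monotone functions of $C(\Kbf)$ using $C(\Kbf')\leq C(\Kbf)$. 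Your identification of which terms generate the $\eta\rho^{\kappa+1}$, $\eta^2\rho^{2(\kappa+1)}$, $\eta^3\rho^{3(\kappa+1)}$ pieces matches the paper's decomposition term-for-term.
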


\begin{proof}
See Appendix~\ref{appen: proof diff between C(K'') and C(K')}.
\end{proof}

By combining the results in the analysis of $C(\Kbf'')-C(\Kbf')$ (Corollary~\ref{corollary:diff between C(K'') and C(K')}) and $C(\Kbf')-C(\Kbf)$ (Corollary~\ref{corollary:diff between C(K') and C(K)}), we obtain an upper bound for $C(\Kbf'')-C(\Kbf)$ in the form of a cubic polynomial concerning $\eta$. It can be represented as:
\# \label{equ:error between C(K) and C(K'')}
C(\Kbf'')-C(\Kbf) \leq -\eta (\varpi_2 \eta^2 +\varpi_1 \eta +\varpi_0) ,
\#
where 
\mequa
& \varpi_2 =- f_1^3(C(\Kbf)) \rho^{3(\kappa+1)}, \\
& \varpi_1 =- f_1^2(C(\Kbf)) \rho^{2(\kappa+1)} + \frac{\LL}{2}\Vert \cP_{\cM^r}(\nabla_{\Kbf} C(\Kbf)) \Vert, \\
& \varpi_0 =\Vert \cP_{\cM^r}(\nabla_{\Kbf} C(\Kbf)) \Vert - f_1^1(C(\Kbf)) \rho^{\kappa+1}.
\mequa
The following corollary gives the conditions to guarantee that $C(\Kbf'')-C(\Kbf)<0$.
\begin{corollary}
\label{corollary: large enough kappa}
If we have 
$$ \kappa >\frac{1}{-\log{\rho}} \log{\frac{2\sqrt{d} C(\Kbf) \cE(C(\Kbf)) \CC \sum_i L_i}{\Vert \cP_{\cM^r}(\nabla_{\Kbf} C(\Kbf)) \Vert^2 \sigma_{1}(\Qbf)}}-1, $$
it is guaranteed that with all the $\eta$ in the following range,
\mequa
0 < \eta \leq \frac{-\varpi_1 - \sqrt{\varpi_1^2 -4 \varpi_2 \varpi_0}}{2\varpi_2},
\mequa
the distributed gradient descent makes the objective function $C(\Kbf)$ decrease. 
\end{corollary}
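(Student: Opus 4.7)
The plan is to view the right-hand side of (\ref{equ:error between C(K) and C(K'')}) as the product of $-\eta$ with the quadratic
$q(\eta) := \varpi_2 \eta^2 + \varpi_1 \eta + \varpi_0$
and then argue that, for $\eta>0$, descent is equivalent to $q(\eta)>0$. The key structural observation is that $f_1^3(C(\Kbf))>0$, so $\varpi_2 = -f_1^3(C(\Kbf))\rho^{3(\kappa+1)} < 0$, making $q$ a downward-opening parabola. Hence $q$ is positive on an open interval around any point where $q>0$, and in particular the set $\{\eta>0 : q(\eta)>0\}$ is precisely $(0,\eta_+)$ provided $q(0)=\varpi_0>0$, where $\eta_+$ denotes the unique positive root of $q$.

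First, I would reduce the entire claim to the single inequality $\varpi_0>0$. Recall that $\varpi_0 = \Vert\cP_{\cM^r}(\nabla_\Kbf C(\Kbf))\Vert - f_1^1(C(\Kbf))\rho^{\kappa+1}$, so this inequality is equivalent to $\rho^{\kappa+1} < \Vert\cP_{\cM^r}(\nabla_\Kbf C(\Kbf))\Vert / f_1^1(C(\Kbf))$. Taking $\log$ (and using $-\log\rho>0$) yields a lower bound on $\kappa$ of exactly the shape announced in (\ref{condition:kappa low limit}); the explicit expression for $f_1^1(C(\Kbf))$ from Section~I.A of the supplementary material (which by inspection scales like $\sqrt{d}\,C(\Kbf)\,\cE(C(\Kbf))\,\CC\sum_i L_i$, up to the factor $\sigma_1(\Qbf)$ and the constant $2$) then matches the stated threshold. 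Substituting the hypothesis on $\kappa$ therefore guarantees $\varpi_0>0$.

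Second, once $\varpi_0>0$ is established, I would identify $\eta_+$. Since $\varpi_2<0$ and $\varpi_0>0$, one has $\varpi_2\varpi_0<0$, so $\varpi_1^2-4\varpi_2\varpi_0 > \varpi_1^2 \ge 0$; the quadratic therefore has two real roots of opposite sign (their product $\varpi_0/\varpi_2$ is negative). A quick sign check on the quadratic formula, remembering that $2\varpi_2<0$, shows that
\[
\eta_+ = \frac{-\varpi_1 - \sqrt{\varpi_1^2 - 4\varpi_2\varpi_0}}{2\varpi_2} > 0,
\]
while the other root is strictly negative. On $(0,\eta_+]$ the downward-opening parabola $q$ is non-negative, and on $(0,\eta_+)$ it is strictly positive; combined with $-\eta<0$, this gives $C(\Kbf'')-C(\Kbf) < 0$ exactly on the range announced in the statement.

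The main obstacle I anticipate is purely bookkeeping: matching the explicit constants and dimensional factors in $f_1^1(C(\Kbf))$ to the threshold written in (\ref{condition:kappa low limit}). In particular, the statement's denominator carries $\Vert\cP_{\cM^r}(\nabla_\Kbf C(\Kbf))\Vert^2$ rather than the bare first power that a naive solution of $\varpi_0>0$ would give, so the proof must extract an additional factor of $\Vert\cP_{\cM^r}(\nabla_\Kbf C(\Kbf))\Vert/\sigma_1(\Qbf)$ from the structure of $f_1^1$, yielding the slightly stronger condition on $\kappa$ (and a bit of slack that is used elsewhere, e.g.\ in coupling this corollary to Corollary~\ref{corollary:diff between C(K'') and C(K')} and to the step-size constraint in (\ref{condition:stepsizecondition})). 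The remainder of the argument — sign analysis of $\varpi_2$, the quadratic formula, and selection of the positive root — is routine.
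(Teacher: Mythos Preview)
Your proposal is correct and follows essentially the same approach as the paper: observe that $\varpi_2<0$, reduce descent to the condition $\varpi_0>0$ (which yields the lower bound on $\kappa$), and then take $\eta$ up to the positive root of the quadratic $\varpi_2\eta^2+\varpi_1\eta+\varpi_0=0$. Your treatment is in fact more detailed than the paper's in-text proof, which simply states these two conditions and defers the constant-matching to the supplementary material; your flagged bookkeeping issue about the squared norm in the $\kappa$ threshold is exactly the sort of detail handled there.
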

\begin{proof}
Note that in~\ref{equ:error between C(K) and C(K'')}, $\varpi_2$ is negative. To obtain meaningful values of $\eta$ that satisfy $C(\Kbf'') - C(\Kbf) < 0$, two conditions must be met: 1) $\varpi_0 > 0$, and 2) $\eta$ must be smaller than the positive solution of the quadratic equation $\varpi_2 \eta^2 + \varpi_1 \eta + \varpi_0 = 0$. The first condition establishes a lower limit for $\kappa$, as demonstrated in (\ref{condition:kappa low limit}). The second condition imposes an additional upper limit on $\eta$, which is the fifth term in (\ref{condition:stepsizecondition}). For the detailed proof, please refer to Part D, Section~\uppercase\expandafter{\romannumeral2} in~\cite{supplementary}.
\end{proof}

\begin{remark}
We clarify the factors that affect system stability by decomposing the one-step gradient descent into two parts. $C(\Kbf')-C(\Kbf)$ represents the performance improvement caused by the one-step gradient descent along the direction of steepest descent direction $\cP_{\cM^r} \nabla_{\Kbf} C(\Kbf)$ over a subspace $\cM^r$. It has been shown in previous works that with a small enough step size along the projection of the exact gradient direction, the descent of $C(\Kbf)$ can be guaranteed~\cite{bu2019lqr} and the generated $\Kbf'$ stays in $S_{C(\Kbf)}$. Such a guaranteed performance improvement is represented as a dashed red line in Fig.~\ref{fig:illustrate}.

However, $C(\Kbf'')-C(\Kbf')$ represents the performance disturbance brought about by the inaccurate approximation of the exact global gradient, i.e., the difference between $\hat{\hbf}_i(\Kbf)$ and $\nabla_{\Kbf_i} C(\Kbf)$ for each agent $i$.
Specifically, the smaller the communication range $\kappa$ is, the less accurate each agent approximates the exact gradient using the local information, and the more likely the controller becomes unstabilizing during the descent process. Intuitively speaking, the condition on $\kappa$ (\ref{condition:kappa low limit}) shows that we have to make a gradient approximation $\hat{\hbf}_i(\Kbf)$ accurate enough so that $\Kbf''$ does not deviate from $\Kbf'$ too much. Otherwise, the objective function $C(\Kbf)$ is not guaranteed to keep decreasing.

Note that as the descent of the objective function, $\Vert \cP_{\cM^r}(\nabla_{\Kbf} C(\Kbf)) \Vert$ may decrease, and the required low limit for $\kappa$ in Corollary~\ref{corollary: large enough kappa} may increase accordingly. If $\kappa$ is fixed, the final controller $\Kbf(T)$ may suffer from a minimum fixed performance degradation compared with the optimal controller $\Kbf^*$. We discuss the case in Claim 1 and Claim 2 in the supplementary material~\cite{supplementary}. 
\end{remark}

\begin{figure}[t]
\begin{center}
\centerline{\includegraphics[width=0.9\columnwidth]{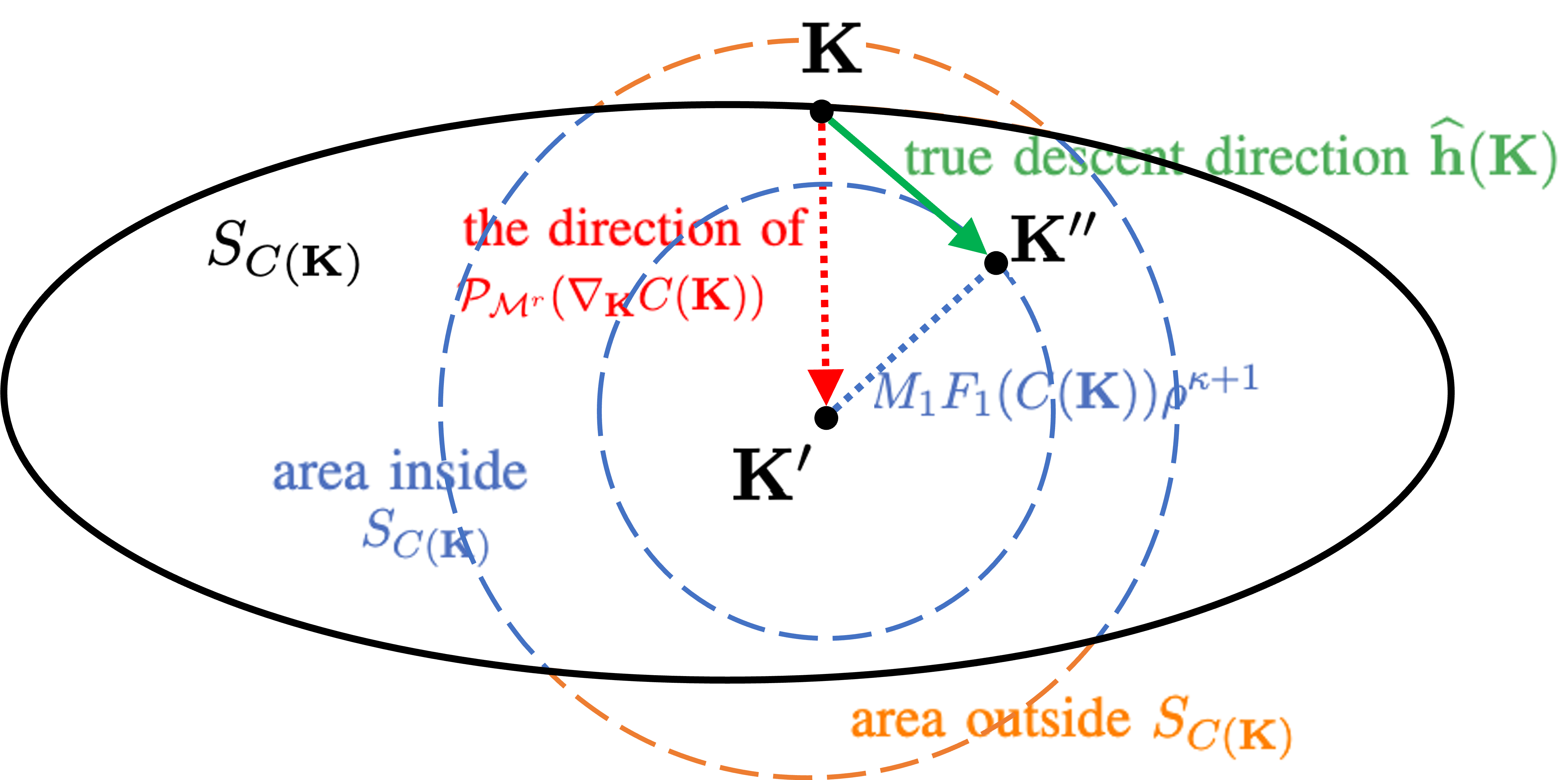}}
\vskip -0.1in
\caption{\small 
The black oval represents the sub-level set $S_{C(\Kbf)}$. The red line and the green line represent the one-step move along the direction of $\cP_{\cM^r}(\nabla_{\Kbf} C(\Kbf))$ and $\hat{\hbf}(\Kbf) = \sum_{i=1}^n \hat{\hbf}_i(\Kbf)$, respectively. The blue line represents the difference caused by the gradient approximation and thus depends on $\kappa$. If the blue circle is so large that $C(\Kbf'')$ moves out of $S_{C(\Kbf)}$ to the orange area, i.e., the $\kappa$ is so small such that the approximation is too inaccurate, the system may take the risk of being unstable.
} 
\label{fig:illustrate}
\end{center}
\vskip -0.3in
\end{figure}

\subsection{Convergence and Degradation Quantification}
\label{subsubsec:convergenceanddegradation}
In this subsection, we quantify the performance gap between the ultimate controller $\Kbf(T)\in \cM^r$ obtained by Algorithm~\ref{alg:main} and the centralized optimal controller $\Kbf^*$ (step 4). 

We keep on using the technique of decomposing $C(\Kbf'')-C(\Kbf)$ into $C(\Kbf'')-C(\Kbf')$ and $C(\Kbf')-C(\Kbf)$. As shown in Corollary~\ref{corollary:diff between C(K'') and C(K')}, we use the Exponential Decay Property (Definition~\ref{def: Exponential Decay Property}) to bound $C(\Kbf'')-C(\Kbf')$. Since $\rho<1$, we can use a single term $M_1 F_1(C(\Kbf)) \rho^{\kappa+1}$ to represent the bound in (\ref{appen:equ: C(K'')-C(K')}) for simplicity:
\mequa
C(\Kbf'')-C(\Kbf') \leq M_1 F_1(C(\Kbf)) \rho^{\kappa+1}.
\mequa

Unlike the stability analysis of $\Kbf'$ and $\Kbf''$, we analyze the term $C(\Kbf')-C(\Kbf)$ to highlight the loss of optimality caused by the projection of the exact gradient $\nabla_{\Kbf} C(\Kbf)$ on $\cM^r$ (Definition~\ref{def:Mr}). We additionally define that:
\mequa
\Kbf^h = \Kbf-\eta \nabla_{\Kbf} C(\Kbf) ,
\mequa
where $\Kbf^h$ is the controller obtained by one-step gradient descent from $\Kbf$ without projection. Consequently, $\Kbf^h$ may not necessarily be a sparse blocked matrix. Considering $C(\Kbf')-C(\Kbf)$, we keep on separating the term into two parts:
\mequa
C(\Kbf')-C(\Kbf) = [C(\Kbf')-C(\Kbf^h)] + [C(\Kbf^h) - C(\Kbf)].
\mequa

First, we analyze $C(\Kbf^h)-C(\Kbf)$.
In some recent works like~\cite{fazel2018global,bu2019lqr}, it has been proved that the global optima can be reached through the policy gradient if the LQR is centralized. In other words, if the controller moves from $\Kbf$ to $\Kbf^h$ every step, with an appropriate choice of $\eta$, the descent process finally reaches the optima. The one-step analysis is as follows:
\begin{corollary}
If we have
\#
\eta \leq \frac{\sigma_1(\Qbf)}{32C(\Kbf)\Vert \Rbf+\Bbf^\Tb \Pbf_\Kbf \Bbf \Vert},
\#
then we have
\#
C(\Kbf^h)-C(\Kbf) \leq -\eta \frac{\mu^2 \sigma_1(\Rbf)}{\Vert \mathbf{\Xi}_{\Kbf^*} \Vert}  (C(\Kbf)-C(\Kbf^*)).
\#
\end{corollary}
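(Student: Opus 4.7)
This is the one-step unprojected descent analysis for centralized LQR policy gradient, which is essentially the template of Fazel et al.\ and Bu et al., instantiated for the average-cost formulation used here.  My plan is to reduce the proof to that template with explicit tracking of constants and to combine two standard ingredients: an ``almost smoothness'' identity that writes $C(\Kbf^h)-C(\Kbf)$ as a linear-plus-quadratic expression in $\eta$, and a gradient dominance (PL-type) inequality that turns the Frobenius norm of the gradient into a multiple of $C(\Kbf)-C(\Kbf^*)$.

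The first ingredient is obtained by subtracting the Lyapunov equation~\eqref{equ:definition of P_K in Bellman equation} written at $\Kbf$ from the same equation at $\Kbf^h$, weighting by the stationary covariance of $\Kbf^h$, and then using the closed-form gradient $\nabla_\Kbf C(\Kbf)=2[(\Rbf+\Bbf^\Tb\Pbf_\Kbf\Bbf)\Kbf-\Bbf^\Tb\Pbf_\Kbf\Abf]\mathbf{\Xi}_\Kbf$ from Lemma~\ref{lem:policy gradient theorem}.  It takes the form
\mequa
C(\Kbf^h) - C(\Kbf) = & \; -\eta\,\tr\!\bigl(\mathbf{\Xi}_{\Kbf^h}\, (\nabla_\Kbf C(\Kbf))^\Tb (\nabla_\Kbf C(\Kbf))\, \mathbf{\Xi}_\Kbf^{-1}\bigr) \\
& + \eta^2\, \tr\!\bigl(\mathbf{\Xi}_{\Kbf^h}\, (\nabla_\Kbf C(\Kbf))^\Tb (\Rbf+\Bbf^\Tb\Pbf_\Kbf\Bbf)\, \nabla_\Kbf C(\Kbf)\bigr),
\mequa
so the $O(\eta)$ term is a negative multiple of $\Vert \nabla_\Kbf C(\Kbf)\Vert_F^2$ up to covariance factors and the $O(\eta^2)$ term is the usual quadratic remainder.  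The second ingredient is the gradient dominance inequality
$$C(\Kbf)-C(\Kbf^*) \leq \frac{\Vert \mathbf{\Xi}_{\Kbf^*}\Vert}{4\mu^2\,\sigma_1(\Rbf)}\,\Vert \nabla_\Kbf C(\Kbf)\Vert_F^2 ,$$
which is obtained by completing the square against the centralized Riccati equation and does not require any sparsity or projection on $\Kbf$.

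Given these two ingredients, the step-size condition $\eta\leq \sigma_1(\Qbf)/(32 C(\Kbf) \Vert \Rbf+\Bbf^\Tb \Pbf_\Kbf \Bbf\Vert)$ is exactly what is needed to make three perturbation estimates hold simultaneously: (i) $\Kbf^h$ is stabilizing, so $\mathbf{\Xi}_{\Kbf^h}$ is well-defined; (ii) via a Lyapunov perturbation bound on~\eqref{sigma_K}, $\mathbf{\Xi}_{\Kbf^h}$ stays within a constant factor of $\mathbf{\Xi}_\Kbf$; and (iii) the $O(\eta^2)$ term in the identity is dominated by half of the $O(\eta)$ term.  After these three reductions the identity collapses to $C(\Kbf^h)-C(\Kbf) \leq -\tfrac{\eta}{2}\tr(\mathbf{\Xi}_{\Kbf^h}(\nabla_\Kbf C)^\Tb (\nabla_\Kbf C) \mathbf{\Xi}_\Kbf^{-1})$; using $\mathbf{\Xi}_{\Kbf^h}\succeq \mu\bI$ and $\mathbf{\Xi}_\Kbf^{-1}\succeq \Vert \mathbf{\Xi}_\Kbf\Vert^{-1}\bI$ together with $\Vert \mathbf{\Xi}_\Kbf\Vert\leq C(\Kbf)/\sigma_1(\Qbf)$ gives a clean descent proportional to $\Vert \nabla_\Kbf C(\Kbf)\Vert_F^2$, and plugging in the gradient dominance inequality produces the stated factor $\mu^2\sigma_1(\Rbf)/\Vert \mathbf{\Xi}_{\Kbf^*}\Vert$.

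The main obstacle I foresee is the bookkeeping that turns the abstract ``$\eta$ is small enough'' into the concrete closed form quoted in the statement.  Each of (i)--(iii) imposes its own upper bound on $\eta$, and it is nontrivial that the single closed-form expression with the factor $C(\Kbf)\Vert \Rbf+\Bbf^\Tb\Pbf_\Kbf\Bbf\Vert$ in the denominator dominates all three; the verification relies on the polynomial-in-$C(\Kbf)$ domination of $\Vert \Pbf_\Kbf\Vert$, $\Vert \mathbf{\Xi}_\Kbf\Vert$, and $\Vert \nabla_\Kbf C(\Kbf)\Vert$, which all follow from $\sigma_1(\Qbf)\tr(\mathbf{\Xi}_\Kbf)\leq C(\Kbf)$.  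This is exactly the type of bookkeeping carried out in the centralized LQR papers, so the remaining task is to check that the estimates carry over to the present noise model, where the stochastic perturbation $\bPsif$ enters only additively in the cost and does not alter the algebraic structure of either the gradient identity or the Riccati-based dominance inequality.
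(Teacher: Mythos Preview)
Your proposal is correct and follows essentially the same route as the paper: the paper's own proof defers to Lemma~11 of Fazel et al.~\cite{fazel2018global}, which is exactly the ``almost smoothness identity plus gradient dominance'' template you outline, with the step-size condition absorbing the Lyapunov perturbation bounds on $\mathbf{\Xi}_{\Kbf^h}$ and the quadratic-in-$\eta$ remainder. The only adaptation needed is to the average-cost noise model (so that $C(\Kbf)=\tr(\Pbf_\Kbf\bPsif)+\sigma_0^2\tr(\Rbf)$ and $\mu=\sigma_1(\bPsif)$ replaces the initial-state covariance), which you already flag as the remaining bookkeeping.
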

\begin{proof}
The methodology for the proof is akin to that used in Lemma 11 of~\cite{fazel2018global} for a centralized LQR problem. For a comprehensive elaboration, refer to Part E1, Section~\uppercase\expandafter{\romannumeral2} in~\cite{supplementary}. This leads to the second term in (\ref{condition:stepsizecondition}). 
\end{proof}

Then, we analyze $C(\Kbf')-C(\Kbf^h)$.
In this term, the factor $\Vert \cP_{\cM^r} (\nabla_{\Kbf} C(\Kbf))-\nabla_{\Kbf} C(\Kbf)\Vert$ plays an important role. It characterizes the effect of projecting the gradient to a subspace and causes the degradation term concerning $\rho^r$ in the ultimate controller's performance gap (\ref{result:performance of C(K(T))}). Therefore, the structural property of $\nabla_{\Kbf} C(\Kbf)$ is crucial. We show that under the Exponential Decay Property, the matrix $\nabla_{\Kbf} C(\Kbf)$ enjoys a special decaying property called $(C_{\nabla_{\Kbf}}, \rho)$-spatially exponential decaying (SED). 
\begin{definition} 
\label{appen:defin:spatially exponential decaying (SED)}
\cite{zhang2022optimal}
[Spatially Exponential Decaying (SED)]
A $n$-by-$n$ blocked matrix $\Xbf$ is $(c,\gamma)$-spatially exponential decaying (SED) if,
\mequa
\Vert [\Xbf]_{ij}\Vert \leq c \cdot \gamma^{\dist(i,j)}, \quad \forall i,j \in \cN ,
\mequa
where $0<\gamma<1$,  $c>0$.
\end{definition}

The constant $C_{\nabla_{\Kbf}}$ is associated with $n$ and other system parameters like $\Abf$, $\Bbf$, $\Rbf$. Due to its complexity, the explicit formulation is detailed in Part C, Section~\uppercase\expandafter{\romannumeral4} in~\cite{supplementary}. The next corollary bound the difference between $C(\Kbf')$ and $C(\Kbf^h)$.

\begin{corollary}
\begin{equation}
\begin{aligned}
& C(\Kbf')-C(\Kbf^h) \leq M_2 F_2(C(\Kbf)) {\rho}^{r}.
\end{aligned}
\end{equation}
\end{corollary}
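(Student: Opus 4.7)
The plan is to combine the $\LL$-smoothness of $C$ (established in Corollary~\ref{corollary:K' Stabilizing} on the sublevel set $S_{C(\Kbf)}$, which extends to a neighborhood containing $\Kbf^h$ once $\eta$ is chosen small enough) with the spatially exponential decay of $\nabla_\Kbf C(\Kbf)$. By the descent lemma applied to the pair $\Kbf^h \to \Kbf'$,
\begin{equation*}
C(\Kbf') - C(\Kbf^h) \leq \langle \nabla_\Kbf C(\Kbf^h),\, \Kbf' - \Kbf^h \rangle + \tfrac{\LL}{2}\,\|\Kbf' - \Kbf^h\|_F^2,
\end{equation*}
and the first-order term is bounded in absolute value by $\|\nabla_\Kbf C(\Kbf^h)\|_F \cdot \|\Kbf' - \Kbf^h\|_F$ via Cauchy--Schwarz. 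The factor $\|\nabla_\Kbf C(\Kbf^h)\|_F$ is in turn controllable by a polynomial in $C(\Kbf)$ using the coerciveness estimates already invoked for $\LL$.

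The crux is to bound $\|\Kbf' - \Kbf^h\|_F$. From~(\ref{equ:def of K'}) and the definition of $\Kbf^h$,
\begin{equation*}
\Kbf' - \Kbf^h = \eta\bigl[\nabla_\Kbf C(\Kbf) - \cP_{\cM^r}(\nabla_\Kbf C(\Kbf))\bigr],
\end{equation*}
which is exactly $\eta$ times the portion of $\nabla_\Kbf C(\Kbf)$ supported on blocks $(i,j)$ with $\dist(i,j)>r$. Invoking the $(C_{\nabla_\Kbf},\rho)$-SED property of $\nabla_\Kbf C(\Kbf)$ (Definition~\ref{appen:defin:spatially exponential decaying (SED)}), a block-Frobenius expansion yields
\begin{equation*}
\bigl\|\nabla_\Kbf C(\Kbf) - \cP_{\cM^r}(\nabla_\Kbf C(\Kbf))\bigr\|_F^2 \leq d \sum_{i,j:\dist(i,j)>r} C_{\nabla_\Kbf}^2\,\rho^{2\dist(i,j)} \leq \frac{d\,n\,C_{\nabla_\Kbf}^2}{1-\rho^2}\,\rho^{2(r+1)},
\end{equation*}
so that $\|\Kbf'-\Kbf^h\|_F \leq \eta\,\tilde{C}\,\rho^{r+1}$ for an explicit constant $\tilde{C}$ depending only on $n$, $d$, $\rho$ and $C_{\nabla_\Kbf}$.

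Substituting back, one obtains
\begin{equation*}
C(\Kbf') - C(\Kbf^h) \leq \eta\,\tilde{C}\,\|\nabla_\Kbf C(\Kbf^h)\|_F\,\rho^{r+1} + \tfrac{\LL}{2}\,\eta^2\,\tilde{C}^2\,\rho^{2(r+1)},
\end{equation*}
and since $\rho<1$ and $\eta$ is bounded by~(\ref{condition:stepsizecondition}), the higher-order $\rho^{2(r+1)}$ term can be absorbed into a single $\rho^r$ factor. This delivers the stated form $M_2 F_2(C(\Kbf))\,\rho^r$, where $M_2$ lumps numerical and graph-dependent constants and $F_2(\cdot)$ collects the polynomial dependence on $C(\Kbf)$ carried through $\LL$, $\|\nabla_\Kbf C(\Kbf^h)\|_F$, and $C_{\nabla_\Kbf}$.

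The principal obstacle is establishing the SED property of $\nabla_\Kbf C(\Kbf) = 2[(\Rbf+\Bbf^\Tb\Pbf_\Kbf\Bbf)\Kbf - \Bbf^\Tb\Pbf_\Kbf\Abf]\mathbf{\Xi}_\Kbf$ with an explicit, well-controlled constant $C_{\nabla_\Kbf}$. This reduces to showing that the two Lyapunov solutions $\Pbf_\Kbf$ and $\mathbf{\Xi}_\Kbf$ are themselves spatially exponentially decaying, so that when multiplied by the sparse factors $\Abf$, $\Bbf$, $\Kbf$ (all lying in some $\cM^{\kappa_0}$) the product inherits an SED structure with only a mild inflation of the decay base. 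This Lyapunov-SED claim is precisely the content of~\cite{zhang2022optimal} and is deferred to Part~C, Section~IV of~\cite{supplementary}; once it is in hand, the remainder of the proof above is a short smoothness-plus-geometric-series calculation.
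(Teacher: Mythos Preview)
Your argument is correct and shares with the paper the key step: both invoke the $(C_{\nabla_\Kbf},\rho)$-SED property of $\nabla_\Kbf C(\Kbf)$ to show that the projection residual $\nabla_\Kbf C(\Kbf)-\cP_{\cM^r}(\nabla_\Kbf C(\Kbf))$, and hence $\|\Kbf'-\Kbf^h\|$, is of order $\rho^r$. Where you diverge is in the conversion from $\|\Kbf'-\Kbf^h\|$ to $C(\Kbf')-C(\Kbf^h)$. You appeal to the $\LL$-smoothness descent lemma, bounding the first-order term by Cauchy--Schwarz and the second-order term by $\tfrac{\LL}{2}\|\Kbf'-\Kbf^h\|_F^2$. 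The paper instead reuses the cost-difference/advantage-function machinery of~(\ref{equ: bound on CKbf}), expanding $C(\Kbf')-C(\Kbf^h)$ through $\Ebf_{\Kbf^h}$, $\mathbf{\Xi}_{\Kbf'}$, $\mathbf{\Xi}_{\Kbf^h}$ and then bounding each factor in terms of $C(\Kbf)$.

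Your route is shorter and more transparent, but it carries one obligation the paper's route avoids: the $\LL$-smoothness in Corollary~\ref{corollary:K' Stabilizing} is stated for the restriction of $C$ to $\cM^r$, whereas $\Kbf^h$ generally lies outside $\cM^r$. You flag this and assert it extends to a neighborhood of $\Kbf^h$; that is true (the underlying smoothness bound from \cite{bu2019lqr} holds on sublevel sets of the unconstrained problem, and $C(\Kbf^h)\le C(\Kbf)$ by exact descent), but it deserves an explicit sentence rather than a parenthetical. The paper's advantage-function approach sidesteps this domain issue entirely, at the cost of a longer bookkeeping of $\|\Ebf_{\Kbf^h}\|$, $\|\mathbf{\Xi}_{\Kbf'}-\mathbf{\Xi}_{\Kbf^h}\|$, etc. Either route lands at the same $M_2 F_2(C(\Kbf))\rho^r$ form.
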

\begin{proof}
SED describes the exponentially decaying property of the $(i,j)$-th sub-matrix's norm as $\dist(i,j)$ grows. So the key point of the proof is that we can bound $\Vert \nabla_{\Kbf} C(\Kbf) - \cP_{\cM^r}(\nabla_{\Kbf} C(\Kbf)) \Vert$ by a term related to $\rho^r$, because the distance between the two agents corresponding to it is at least $r$.
\mequa
& \big\Vert \nabla_{\Kbf} C(\Kbf) - \cP_{\cM^r}(\nabla_{\Kbf} C(\Kbf)) \big\Vert \\
& \leq \sum_{i=1}^n \sum_{j\in \cN_{i}^{-r}}  \big\Vert [\nabla_{\Kbf} C(\Kbf)]_{ij} \big\Vert_F \\
& \leq \sum_{i=1}^n \sum_{j\in \cN_{i}^{-r}} C_{\nabla_{\Kbf}} \rho^{\dist(i,j)} \leq n \max_i(|\cN_{i}^{-r}|) C_{\nabla_{\Kbf}} \rho^{r}.
\mequa
Using this term, we can bound $C(\Kbf') - C(\Kbf^h)$ employing a technique similar to that in (\ref{equ: bound on CKbf}). The remainder of the proof is detailed in Part E2, Section~\uppercase\expandafter{\romannumeral2} in~\cite{supplementary}.
\end{proof}

By combining the analysis of $C(\Kbf^h)-C(\Kbf)$ and  $C(\Kbf')-C(\Kbf^h)$, we obtain the one-step performance gap compared with $\Kbf^*$ :
\mequa
& C(\Kbf'')-C(\Kbf) \leq - \eta \frac{\mu^2 \sigma_1(\Rbf)}{\Vert \mathbf{\Xi}_{\Kbf^*} \Vert} (C(\Kbf)-C(\Kbf^*)) \\
& + M_1 F_1(C(\Kbf))  \rho^{\kappa+1} + M_2 F_2(C(\Kbf))) \rho^r,
\mequa
Inductively we have,
\mequa
& C(\Kbf(T))-C(\Kbf^*) \\
\leq & \big(1- \eta \frac{\mu^2 \sigma_1(\Rbf)}{\Vert \mathbf{\Xi}_{\Kbf^*} \Vert}\big)^T (C(\Kbf(0)) - C(\Kbf^*)) \\
& + \frac{\Vert \mathbf{\Xi}_{\Kbf^*} \Vert}{\eta \mu^2 \sigma_1(\Rbf)} [M_1 F_1(C(\Kbf(0))) \rho^{\kappa+1} + M_2 F_2(C(\Kbf(0)))\rho^r],
\mequa
Note that both $F_1(C(\Kbf))$ and $F_2(C(\Kbf))$ monotonically increases with the increase of $C(\Kbf)$, so $F_1(C(\Kbf^*)) \leq F_1(C(\Kbf(T))) \leq F_1(C(\Kbf(0)))$, $F_2(C(\Kbf^*) \leq F_2(\Kbf(T)) \leq F_2(C(\Kbf(0)))$. Provided a minimum iteration number $T$ as (\ref{condition:T condition}), we obtain the final result (\ref{result:performance of C(K(T))}) as shown in Section~\ref{appen:subsec:iteration results} in the Appendix.

\begin{remark}
As shown in Theorem~\ref{theorem:main theorem}, the performance degradation compared with the optimal controller consists of two parts. The term related to $\rho^{\kappa+1}$ represents the degradation introduced by the gradient approximation. Each agent only uses the information within its $\kappa$-hop neighborhood to make a relatively accurate approximation of the exact gradient. Then they conduct the distributed policy gradient descent locally, which causes errors in every iteration. The term related to $\rho^r$ represents the degradation brought by the controller's truncation (or projection). Each agent implement its control input depending on its neighbors within $r$-hop, while in centralized controlling, each agent affects any other agent no matter how far away they are. Due to the sparsity of the system matrices $\Abf$, $\Bbf$, $\Qbf$, $\Rbf$, the degradation decreases to $0$ exponentially with the growth of communication range limit $\kappa$ and control range $r$, so at least we can still obtain a near-optimal controller. 

It is significant to see that the graph diameter is not included in the theoretical performance gap in (\ref{result:performance of C(K(T))}), though if we set $r$ as the graph diameter and let $\kappa$ equal to $r$, the multi-agent problem will degenerate into a single-agent problem, and the second and third error terms should be subtracted. The reason is that within our theoretical framework, each node is presumed to operate within an infinitely expansive network. Therefore the diameter of the graph does not appear in our final conclusion. To incorporate the graph's diameter into the final conclusion, it may be essential to consider the graph's specific structure and the boundary conditions. We consider it to be a topic for our future work.

By setting $r$ and $\kappa$ to the graph's diameter, the multi-agent scenario simplifies to a single-agent case, the second and third error terms should be eliminated~\cite{fazel2018global}. But it is not the case and the graph diameter does not factor into the theoretical performance gap as outlined in (\ref{result:performance of C(K(T))}).  It stems from our theoretical assumption that each node operates within a boundlessly network, thereby rendering the graph's diameter irrelevant to our final deductions. To integrate the graph diameter into our conclusions, a detailed examination of the graph's inherent structure and edge conditions may be required. This represents a potential topic for our future research.
\end{remark}

\subsection{Conditions for Exponential Decay Property}
\label{subsec:conditionsforexponential}
This subsection presents the conditions for the Exponential Decay Property (Definition~\ref{def: Exponential Decay Property}) to hold in a networked LQR setting.

\begin{lemma}
\label{lem:conditionsForLQR1}
At time $t$, for any node $j$, suppose that $i,i'$ are $\kappa$, $\kappa+1$-hop neighbors of $j$, respectively, for any $\Kbf$ generated from the gradient descent process, if there exists $\rho \in (0,1)$ such that the following inequality holds,
\mequa
\left \Vert [(\Abf-\Bbf \Kbf)^t]_{i'j} \right \Vert \leq \rho \left \Vert [(\Abf-\Bbf \Kbf)^t]_{ij} \right \Vert,
\mequa
then the $(C_{Q_{\Kbf}^i},\rho)$ Exponential Decay Property holds.
\end{lemma}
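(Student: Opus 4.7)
The plan is to open up $Q_\Kbf^i$ through its quadratic structure inherited from the LQR value function, identify which blocks of the state/action perturbation feed into it, and then channel the lemma's one-step spatial decay hypothesis through the closed-loop propagator $(\Abf-\Bbf\Kbf)^{t-1}$. First, starting from $Q_\Kbf^i(\xbf,\ubf)=\sum_{t\geq 0}\EE[c_i(t)-C_i(\Kbf)\mid \xbf(0)=\xbf,\ubf(0)=\ubf]$, I would split each trajectory into a deterministic drift $\mathring{\xbf}(t)=(\Abf-\Bbf\Kbf)^{t-1}(\Abf\xbf+\Bbf\ubf)$ for $t\geq 1$ plus noise that is independent of $(\xbf(0),\ubf(0))$. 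Substituting into $c_i(t)=\xbf(t)^\Tb\Qbf_i\xbf(t)+\ubf(t)^\Tb\Rbf_i\ubf(t)$ and subtracting the analogous expression for $(\xbf',\ubf')$ cancels the noise contribution and the mean $C_i(\Kbf)$, and the polarization identity $a^\Tb\Mbf a-b^\Tb\Mbf b=(a-b)^\Tb\Mbf(a+b)$ collapses $Q_\Kbf^i(\xbf,\ubf)-Q_\Kbf^i(\xbf',\ubf')$ into a bilinear sum over $t\geq 1$ whose thin factor is $(\Abf-\Bbf\Kbf)^{t-1}\boldsymbol{\delta}$, with $\boldsymbol{\delta}:=\Abf(\xbf-\xbf')+\Bbf(\ubf-\ubf')$, plus an analogous action-difference piece carrying an extra $-\Kbf$.

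Next I would exploit the spatial sparsity of every matrix in sight. Because $\Qbf_i$ and $\Rbf_i$ are supported on $\cN_i\times\cN_i$ and $\{i\}\times\{i\}$, only block rows/columns indexed by $\cN_i$ of the bilinear form are nonzero; in particular the $t=0$ contribution vanishes as soon as $\kappa\geq 1$, since $\cN_i\cap\cN_{-i}^\kappa=\emptyset$. For $t\geq 1$, the perturbation $(\xbf-\xbf',\ubf-\ubf')$ is supported on $\cN_{-i}^\kappa$; combined with $\Abf\in\cM^2$ and $\Bbf\in\cM^0$, the vector $\boldsymbol{\delta}$ is supported on agents $b$ with $\dist(i,b)\geq \kappa-1$, so every block that actually contributes has the form $[(\Abf-\Bbf\Kbf)^{t-1}]_{kb}$ with $k\in\cN_i$ and $\dist(k,b)\geq \kappa-2$. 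Applying the lemma's hypothesis at source $j=b$ iteratively along a shortest path from $b$ toward $k$, I would telescope
\mequa
\left\Vert [(\Abf-\Bbf\Kbf)^{t-1}]_{kb}\right\Vert \leq \rho^{\,\dist(k,b)}\left\Vert [(\Abf-\Bbf\Kbf)^{t-1}]_{bb}\right\Vert ,
\mequa
which yields $\rho^{\kappa+1}$ scaling after absorbing a fixed $\rho^{-O(1)}$ into the constant.

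Finally I would close the sum over $t$. The thick factor $\mathring{\xbf}(t)+\mathring{\xbf}'(t)$, together with the diagonal block $[(\Abf-\Bbf\Kbf)^{t-1}]_{bb}$, is bounded by geometric quantities in $\Vert\Abf-\Bbf\Kbf\Vert^{t-1}$ scaled by $\Vert\xbf\Vert_\infty+\Vert\xbf'\Vert_\infty+\Vert\ubf\Vert_\infty+\Vert\ubf'\Vert_\infty$. Since every iterate $\Kbf$ along the gradient path is stabilizing by Corollary~\ref{corollary:K' Stabilizing}, $\rho(\Abf-\Bbf\Kbf)<1$, the time sum converges, and the aggregate prefactor is a finite $\CC(\xbf,\ubf)$ polynomial in the state/action norms and the system parameters. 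I expect the main obstacle to be the spatial-support bookkeeping: the $+2$ hops from $\Abf\in\cM^2$ and the $r$ hops from $\Kbf\in\cM^r$ inflate the effective ``near-$i$'' region, and one must verify that these constant-size inflations land in $\CC$ rather than eroding the exponent. A subtler issue is that the telescoping implicitly selects a shortest path along which the intermediate blocks do not vanish; if every shortest path degenerates, one needs a slightly strengthened form of the hypothesis, applied along every decaying direction in the block graph, to close the argument uniformly in $t$.
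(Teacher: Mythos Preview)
Your approach is sound and matches the spirit of the paper's argument; the paper defers the full proof to its supplementary material, but the helper lemmas it exposes in Appendix~\ref{appen:subsec:helper definitions and lemmas} (SED and L-SED closure under sums and products) indicate the same mechanism you use: the hypothesis makes each power $(\Abf-\Bbf\Kbf)^t$ spatially exponentially decaying in the row index relative to a fixed column $j$, and threading this through the quadratic structure of $Q_\Kbf^i$ (whose cost matrices $\Qbf_i,\Rbf_i$ are supported near $i$) yields the $\rho^{\kappa+1}$ bound on the difference.

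Two small corrections to your execution. First, your concern about ``degenerate shortest paths'' is unnecessary. The hypothesis is universal over \emph{all} pairs $(i,i')$ at distances $\kappa,\kappa+1$ from $j$, so at every telescoping step you may compare the block at distance $d$ to \emph{any} block at distance $d-1$; if every such reference block vanishes, the hypothesis forces the target block to vanish as well, and the chain ends trivially. No path selection or nondegeneracy assumption is needed. Second, you bound the thick factor and the diagonal block by $\Vert\Abf-\Bbf\Kbf\Vert^{t-1}$ and then invoke $\rho(\Abf-\Bbf\Kbf)<1$ to close the time sum; but $\rho(\cdot)<1$ does not give $\Vert\cdot\Vert<1$. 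The correct bound is $\Vert(\Abf-\Bbf\Kbf)^{t-1}\Vert\leq C_\Kbf\gamma^{t-1}$ for some $\gamma\in(\rho(\Abf-\Bbf\Kbf),1)$, obtained for instance from the Lyapunov equation~(\ref{equ:definition of P_K in Bellman equation}), which gives $\Pbf_\Kbf-(\Abf-\Bbf\Kbf)^\Tb\Pbf_\Kbf(\Abf-\Bbf\Kbf)\succeq\Qbf\succ 0$. With that replacement the time sum converges and the remainder of your outline goes through.
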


This lemma indicates some specific spatial decay characteristics brought by the structure of the state transition matrix $\Abf - \Bbf \Kbf$. However, determining whether the condition can be satisfied is challenging, even when provided with the graph structure and transition dynamics model. This difficulty arises as it is impractical to iterate over all $t$ for verification. So in the following lemma, a stricter but more intuitive condition is presented, which clarifies the nature of the property in the LQR setting more directly.

\begin{lemma}
\label{lem:conditionsForLQR2}
With a stabilizing controller $\Kbf \in \cM^r$, for any agent $i$ and agent $j$ that are $\kappa$-hop neighbors, if there exist constants $C>0$, $\cD>0$, and $\rho\in(0,1)$ such that 
$|\cW_{i\rightarrow j}^t(r) | \leq C  \cD^{t} {\rho}^{\kappa}$ and $\overline{[\Abf - \Bbf \Kbf]} \cdot \cD \leq 1$ holds, then the $(C_{Q_{\Kbf}^i}, \rho)$-Exponential Decay Property holds. 
\end{lemma}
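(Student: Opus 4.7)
The plan is to derive the Exponential Decay Property directly by (i) turning the two hypotheses into a uniform block-wise spatial decay bound on the closed-loop matrix powers $(\Abf-\Bbf\Kbf)^t$, (ii) propagating that decay to the local value matrix $\Pbf_\Kbf^i$ through its Lyapunov expansion, and (iii) reading the claim off the quadratic form of $Q_\Kbf^i$. I would avoid routing through Lemma~\ref{lem:conditionsForLQR1}, whose hypothesis is a ratio of block norms that is not directly implied by the absolute bounds we obtain here.

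For step (i), expand the $(k,l)$-block of $(\Abf-\Bbf\Kbf)^t$ as a sum over $t$-step walks $k = i_0, i_1, \ldots, i_t = l$ in the graph, constrained by the $r$-hop sparsity induced by $\Kbf \in \cM^r$ and $\Abf \in \cM^2$; under the natural interpretation, $|\cW_{k\to l}^t(r)|$ is precisely the count of such walks. Bounding each block factor by $\overline{[\Abf-\Bbf\Kbf]}$ and invoking the walk-count hypothesis yields $\Vert [(\Abf-\Bbf\Kbf)^t]_{kl} \Vert \leq C(\overline{[\Abf-\Bbf\Kbf]} \cdot \cD)^t \rho^{\dist(k,l)}$, which by the product condition $\overline{[\Abf-\Bbf\Kbf]} \cdot \cD \leq 1$ collapses to the $t$-uniform bound $C\rho^{\dist(k,l)}$.

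For step (ii), substitute this into $\Pbf_\Kbf^i = \sum_{t\geq 0} (\Abf-\Bbf\Kbf)^{t\Tb}(\Qbf_i + \Kbf^\Tb \Rbf_i \Kbf)(\Abf-\Bbf\Kbf)^t$. Since $\Qbf_i$ is supported near agent $i$ and $\Kbf^\Tb \Rbf_i \Kbf$ is supported within an $r$-hop of $i$, the $(k,l)$-block of each summand becomes a sum over indices $m, m'$ near $i$ of products $[(\Abf-\Bbf\Kbf)^t]_{mk}^\Tb [\Qbf_i + \Kbf^\Tb \Rbf_i \Kbf]_{mm'} [(\Abf-\Bbf\Kbf)^t]_{m'l}$. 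Step (i) applied to the two outer factors gives a per-$t$ bound of order $\rho^{\dist(k,i)+\dist(l,i)}$; summability over $t$ is then obtained by tightening the geometric factor using the stabilizing property of $\Kbf$, producing $\Vert[\Pbf_\Kbf^i]_{kl}\Vert \leq C' \rho^{\dist(k,i)+\dist(l,i)}$. For step (iii), insert this into the quadratic expression for $Q_\Kbf^i$ and compare inputs agreeing on $\cN_i^\kappa$: only blocks indexed by at least one vertex outside $\cN_i^\kappa$ contribute to the difference, and each such block carries a factor $\rho^{\kappa+1}$, while the remaining quadratic $\xbf,\ubf$-dependence is absorbed into the constant $C_{Q_{\Kbf}^i}(\xbf,\ubf)$.

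The main obstacle is the tension in step (ii) between summability over $t$ and spatial decay: the walk-count argument alone delivers $t$-uniform but not $t$-summable bounds under $\overline{[\Abf-\Bbf\Kbf]} \cdot \cD \leq 1$ (the equality case is especially delicate), so the argument must lean on $\rho(\Abf-\Bbf\Kbf)<1$ in a way that extracts a geometric $t$-factor without degrading the $\rho^{\dist}$ spatial decay; one natural device is interpolating between $\min\{C\rho^{\dist(k,l)}, C_0 \gamma^t\}$ for some $\gamma \in (\rho(\Abf-\Bbf\Kbf),1)$. A secondary challenge is producing $C_{Q_{\Kbf}^i}(\xbf,\ubf)$ as an explicit polynomial in $\Vert \xbf\Vert_\infty$ and $\Vert \ubf\Vert_\infty$ matching the form referenced in the supplementary material, which demands careful bookkeeping of the localized supports of $\Qbf_i$ and $\Rbf_i$ through the entire derivation.
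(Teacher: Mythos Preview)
Your three-step plan --- walk-expand the blocks of $(\Abf-\Bbf\Kbf)^t$, push the resulting bound through the Lyapunov series for a localized $\Pbf_\Kbf^i$ to obtain an L-SED estimate $\Vert[\Pbf_\Kbf^i]_{kl}\Vert \leq C' \rho^{\dist(k,i)+\dist(l,i)}$ (precisely the L-SED notion the paper introduces in its appendix), then read off the quadratic form of $Q_\Kbf^i$ --- matches the paper's architecture. The only substantive divergence is how you close the sum over $t$ in step (ii). Per the Remark immediately following the lemma, the authors use the hypothesis $\overline{[\Abf-\Bbf\Kbf]}\cdot\cD \leq 1$ \emph{directly} to construct the convergent geometric series: they retain the factor $(\overline{[\Abf-\Bbf\Kbf]}\cdot\cD)^t$ from step (i) and sum it, without invoking the stabilizing property. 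Your stability-based interpolation $\min\{C\rho^{\dist(k,l)}, C_0\gamma^t\}$ is a sound alternative that covers the boundary case $\overline{[\Abf-\Bbf\Kbf]}\cdot\cD=1$ (which the paper's direct route does not, unless the hypothesis is read as strict), at the likely cost of a slightly weaker final rate $\rho'\in(\rho,1)$ rather than the stated $\rho$. The paper's Remark in fact concedes that the product condition ``is not necessary if the system is stable'' and leaves relaxing it as future work --- so your device is the refinement the authors flag as desirable but do not carry out.
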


The detailed proof of these two lemma and the explicit form of $C_{Q_{\Kbf}^i}$ are deferred to Section~\uppercase\expandafter{\romannumeral4} in~\cite{supplementary}.
$\cW_{i\rightarrow j}^t(r)$ is defined as a set that contains all the walks from $i$ to $j$ with length $t$ (Definition~\ref{appen:def:walk number},  Appendix~\ref{appen:subsec:helper definitions and lemmas}). Such a property yields the intuition that the decay property results from the weak interaction between agents and the low connectivity of the underlying network. In particular, on the one hand, $\overline{[\Abf - \Bbf \Kbf]}$ represents the maximum impact between two agents in the transition process. On the other hand, the network should be as sparse as possible so that $\cD$ is small enough. With a fixed $t$, the number of the available walks ($| \cW_{i\rightarrow j}^t |$) is supposed to decrease exponentially with the growth of the distance $\kappa$ between $i$ and $j$. The conditions above are consistent with the assumptions in~\cite{qu2020average} concerning the tabular case and average-reward setting  in RL.

These additional conditions show that the Exponential Decay Property does not hold uniformly in the networked LQR settings. Fortunately, the cardinality of $\cW_{i\rightarrow j}^t(r)$ ($|\cW_{i\rightarrow j}^t(r) |$) has an explicit mathematical expression in certain regular and representative graphs where we can directly calculate $C$, $\cD$ and $\rho$, so there are plenty of cases where we can verify and apply this property. In Appendix~\ref{subsec:Special Cases Study for Exponential Decay Property}, we present the analytic and numerical results in some representative graph structures.

\begin{remark}
The conditions in Lemma~\ref{lem:conditionsForLQR2} are stricter than those in Lemma~\ref{lem:conditionsForLQR1}. In the proof of Lemma~\ref{lem:conditionsForLQR2} (Part B, Section~\uppercase\expandafter{\romannumeral4} in~\cite{supplementary}), the condition $\overline{[\Abf - \Bbf \Kbf]} \cdot \cD \leq 1$ is supplemented to construct a convergent geometric sequence summation, which is not necessary if the system is stable. We leave the finding of a more relaxed  condition as a topic of future work. 
\end{remark}

\section{Simulation Results}

\begin{figure*}[t]
  \centering
  \includegraphics[width=0.8\linewidth]{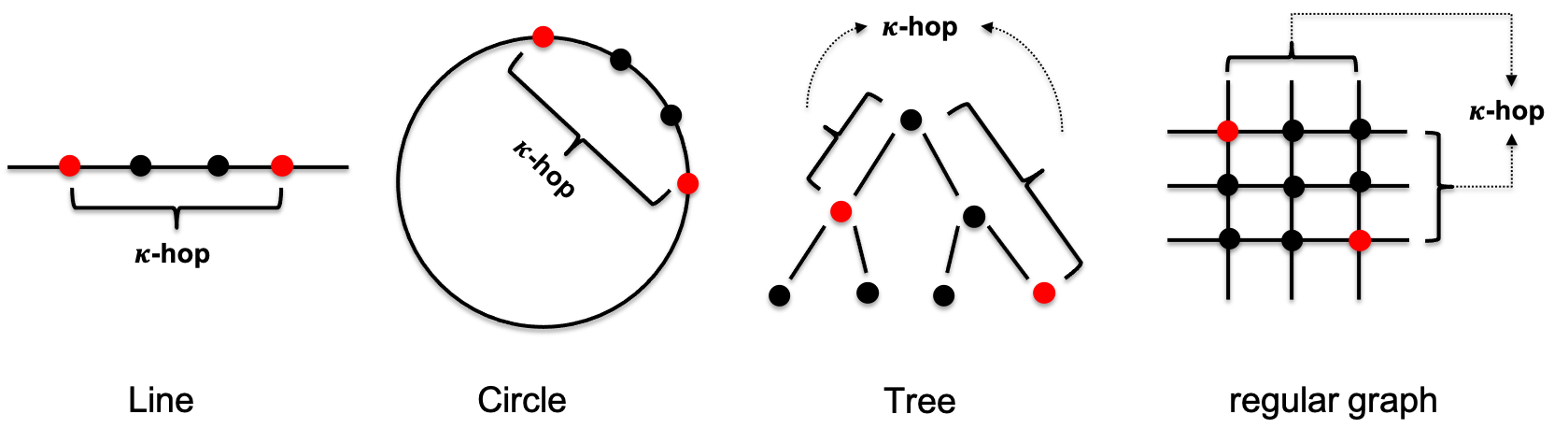}
  \caption{\small The diagram of the four representative graphs: line, circle, 2-ary tree and 4-regular grid.}
  \label{fig:diagram}
\end{figure*}


In this section, we present a set of examples to demonstrate the results reported in this paper. We choose four representative graphs: line, circle, tree, and 4-regular grid. The diagram of these four types of graphs is shown in Fig.~\ref{fig:diagram}. The combination of these graphs can represent a certain quantity of random graphs.

\subsection{Simulation Results for Distributed Policy Gradient Descent}

The system is with parameters $(\Abf, \Bbf)$, where $\Bbf=\Ibf$ and $\Abf$ is initially consistent with the adjacency matrix of each graph and is scaled by 0.9 until $\rho(\Abf)<1$, then the initial controller $\Kbf(0)$ can be set to be $\textbf{0}$. The cost matrices $(\Qbf, \Rbf)$ are set to be identity with appropriate dimensions. We set step size $\eta=0.001$. The noise covariance matrix $\bPsif$ is set to be $0.5 \Ibf$. The total iteration $T$ is set to be 4000. In the line and the circle graph, we set 99 nodes. In the 2-ary tree graph, we set 127 nodes to construct an 8-layer full binary tree. In the 4-regular grid, we set $11\times 11=121$ nodes. We conduct the distributed policy gradient descent as Algorithm~\ref{alg:main}.

We set up different experiments to verify the impact of the communication range $\kappa$ and control range $r$, respectively. In the experiments regarding $r$, we set $\kappa$ to be maximum, which means the gradient approximation is accurate. In the experiments regarding $\kappa$, $r$ is set to be maximum, which means that there will be no gradient truncation. 

As is shown in Fig.~\ref{fig: relative performance gap with kappa and r}, we explore the relative cost error \(\frac{C(\Kbf(T))-C(\Kbf^*)}{C(\Kbf^*)}\), under varying communication range limits, \(\kappa\), and control range, \(r\). On the left, with \(r\) maximized to prevent gradient truncation and a Gaussian-distributed approximation error added to \(\nabla_{\Kbf_i} C(\Kbf)\), we examine \(\kappa \in [2,3,5,10,20]\). On the right, setting \(\kappa\) to its maximum ensures every agent \(i\) has access to the exact gradient, with \(r\) varying within \([2,3,5,10,20]\). For example, at \(r=2\), an agent's dynamics depend only on its 2-hop neighbors, whereas \(r=20\) equates to a centralized LQR scenario, allowing optimal convergence. We also show the performance when $\kappa$ and $r$ are less than the diameter of the graph, the results are shown in Fig.3 in the supplementary material~\cite{supplementary}.

In both semi-logarithmic plots with logarithmic y-axis in Fig.~\ref{fig: relative performance gap with kappa and r}, the curve of relative cost error exhibits strong linear properties, which aligns with the theoretical results shown as (\ref{result:performance of C(K(T))}) in Theorem~\ref{theorem:main theorem}. The results verify our main conclusion that, as $r$ and $\kappa$ increase, the performance gap between $\Kbf(T)$ and $\Kbf^*$ decreases exponentially. With a finite communication range $\kappa$, it is possible to design a scalable and decentralized gradient descent algorithm to obtain a near-optimal controller. Our future work will delve into scalable model-free algorithms, focusing on Monte-Carlo and Actor-Critic methods. The detailed discussion and more results can be found in Section~\uppercase\expandafter{\romannumeral5} in~\cite{supplementary}.



\begin{figure*}[t]
	\centering
	\begin{minipage}{0.49\linewidth}
		\centering
		\includegraphics[width=0.9\linewidth]{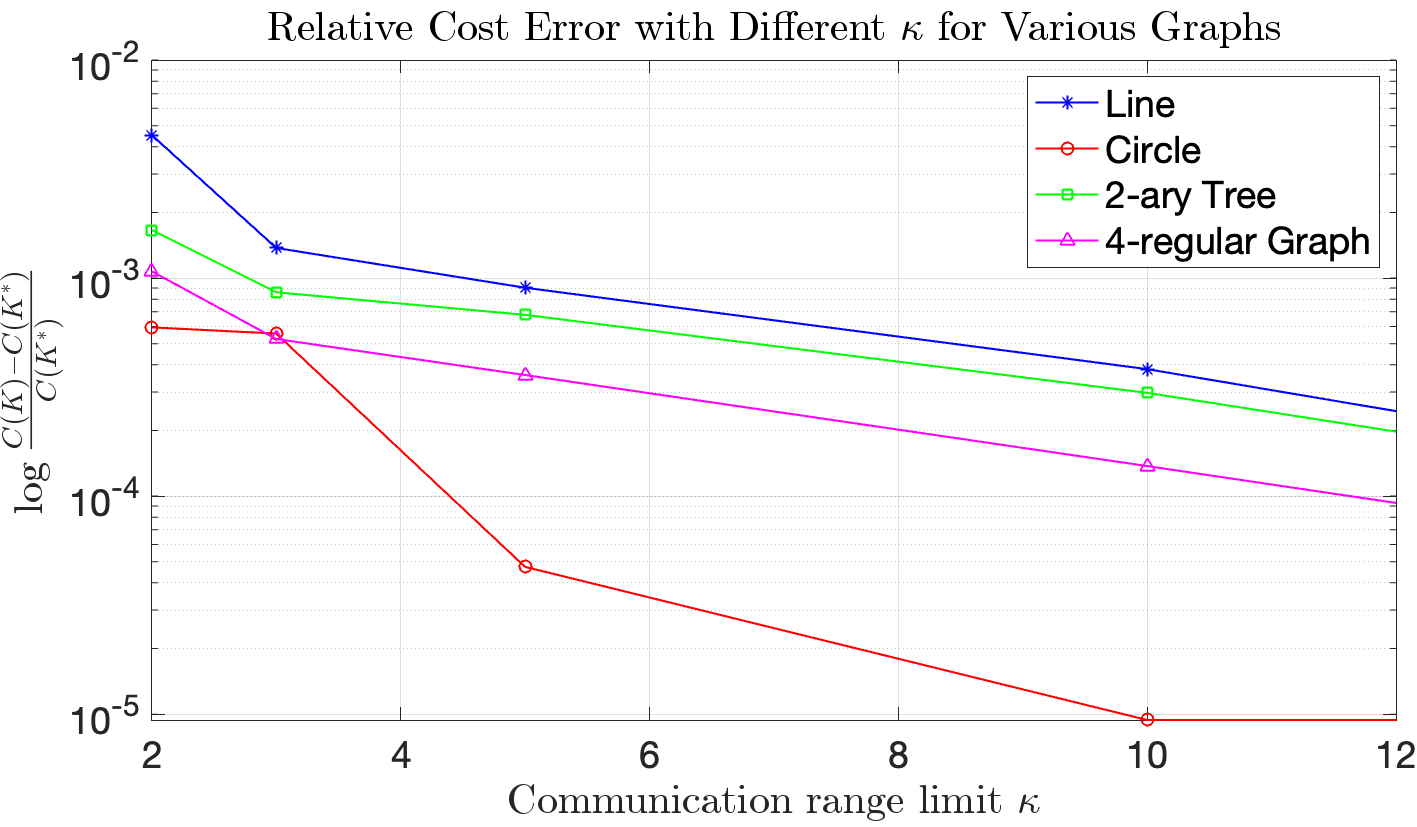}
	\end{minipage}
	\begin{minipage}{0.49\linewidth}
		\centering
		\includegraphics[width=0.9\linewidth]{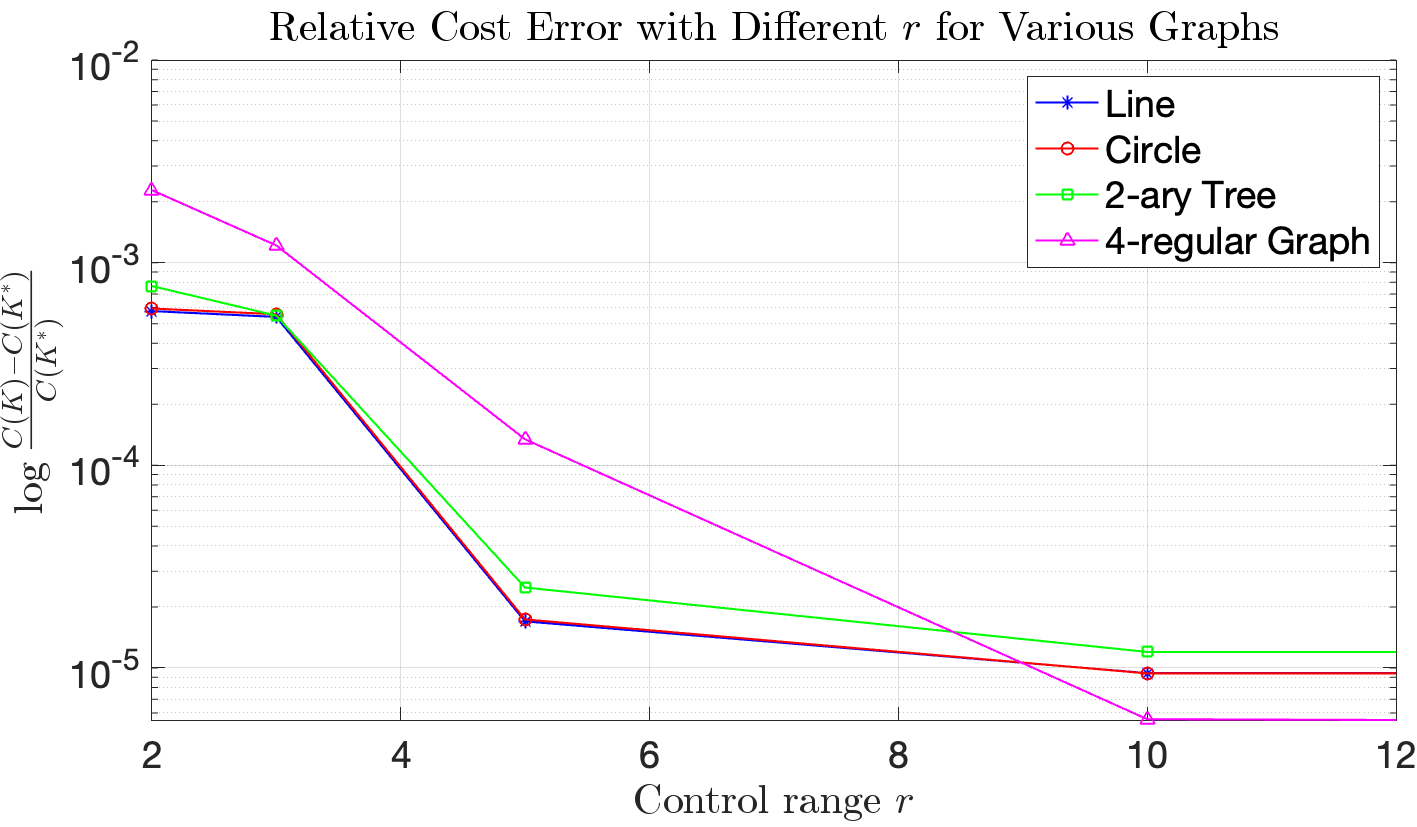}
	\end{minipage}
 	\caption{\small Relative performance gap compared to the optimal controller $\Kbf^*$ with different communication range limit $\kappa$ and different control range $r$. In semi-logarithmic plots with a logarithmic y-axis, the relative cost error curve is linear, aligning with theoretical results, and confirming the main conclusion that the performance gap between $\Kbf(T)$ and $\Kbf^*$ decreases exponentially as $r$ and $\kappa$ increase.}
        \label{fig: relative performance gap with kappa and r}
\end{figure*}

\subsection{Case Studies for Exponential Decay Property}
\label{subsec:Special Cases Study for Exponential Decay Property}
We give some typical examples and numerical results for Lemma~\ref{lem:conditionsForLQR2} to show that the Exponential Decay Property holds in these four representative graphs. Note that in Lemma~\ref{lem:conditionsForLQR2}, it requires that $| \cW_{i\rightarrow j}^t(r) | \leq C \cD^{t} {\rho}^{\kappa}$ and $\overline{[\Abf - \Bbf \Kbf]} \cdot \cD \leq 1$. For the first condition, we present the theoretical bound of $| \cW_{i\rightarrow j}^t(r) |$ and the exact value of $C$, $D$ and $\rho$ in TABLE~\ref{tab:theoretical_bound}. The second condition is an additional restriction for the stability of the whole system and the detailed discussion is deferred to Section~\uppercase\expandafter{\romannumeral6} in~\cite{supplementary}. 

\begin{remark}
We present the theoretical bound and the true number of walks in 4-regular grid in Fig. ~\ref{fig:number of walks in 4-regular graph} and more results are deferred to Section~\uppercase\expandafter{\romannumeral6} in~\cite{supplementary}. Note that the tighter theoretical bounds for the number of walks in these graphs could be expected, and are left as our future work. 
\end{remark}


\begin{figure*}[t]
	\centering
	\begin{minipage}{0.49\linewidth}
		\centering
		\includegraphics[width=0.9\linewidth]{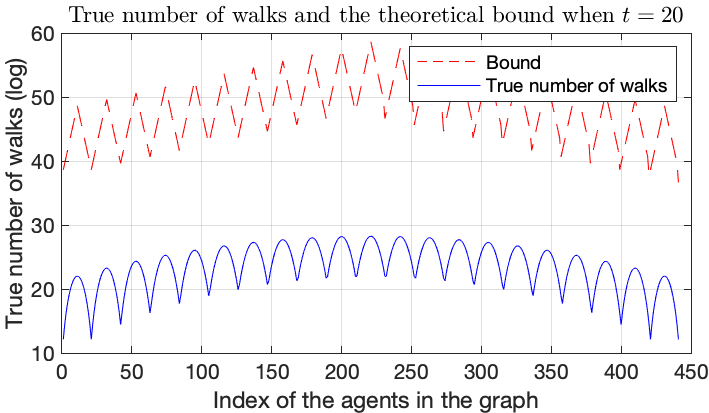}
	\end{minipage}
	\begin{minipage}{0.49\linewidth}
		\centering
		\includegraphics[width=0.9\linewidth]{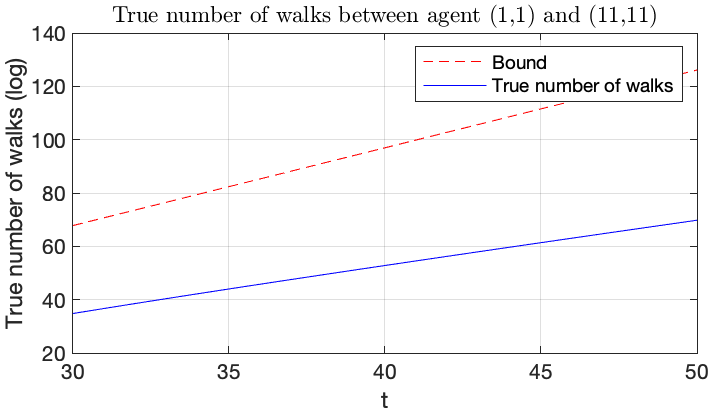}
	\end{minipage}
	\caption{\small The number of the true walks $| \cW_{i \rightarrow j} |$ and the theoretical bound in a $21 \times 21$ 4-regular grid. On the left, we fix $t=20$ and vary $\kappa$. In the right, We fix $\kappa=20$ and vary $t$.}
        \label{fig:number of walks in 4-regular graph}
\end{figure*}

\begin{table}[!t]
\caption{The theoretical bound of the number of the walks in four representative graphs: lines, circle, $f$-ary tree, and 4-regular grid.}
\centering
\renewcommand\arraystretch{2}
\label{tab:theoretical_bound}
\begin{tabular}{c|c|c|c|c}
\hline
Graph Type & Bound & $C$ & $D$ & $\rho$ \\
\hline
Line & $e \cdot [(\frac{3e}{2})^{\frac{3}{2}}]^t \cdot [e^{-\frac{1}{2}}]^\kappa$ & $e$ & $(\frac{3e}{2})^{\frac{3}{2}}$ & $e^{-\frac{1}{2}}$\\
\hline
Cycle & $\frac{e}{2n} (\frac{3}{2} e^{2})^t \cdot e^{-\kappa}$ & $\frac{e}{2n}$ & $\frac{3}{2} e^{2}$ & $e^{-\frac{1}{2}}$\\
\hline
$f$-ary Tree & $(2e^2 f^{\frac{1}{2}})^t (e^{-1}f^{-\frac{1}{2}})^{\kappa}$ & 1 & $2e^2 f^{\frac{1}{2}}$ & $e^{-1}f^{-\frac{1}{2}}$\\
\hline
4-regular Grid & $\frac{e}{2} (\frac{5e^2}{2})^t (e^{-1})^{\kappa}$ & $\frac{e}{2}$ & $\frac{5e^2}{2}$ & $e^{-1}$\\
\hline
\end{tabular}
\end{table}

\section{Conclusions and Future Works}
This paper has provided provable guarantees that the distributed policy gradient descent method converges to the near-optimal solution of networked LQR control problems.  
The \emph{near-optimality} represents the performance gap exponentially small in the communication range limit $\kappa$ and the control range $r$. We provided the conditions for localized gradient approximation and verify our proposition in some representative graphs. Additionally, we also showed that the controllers generated by the distributed policy gradient descent process can be guaranteed to stabilize the system. 

As the first attempt to providing  scalability and optimality in the distributed policy gradient method of networked LQR, our work 
has opened up several directions for  future research. We aim to develop zero-order and first-order optimization algorithms, incorporating theories like Monte-Carlo estimation as in~\cite{fazel2018global,li2021distributed,zhang2021derivative} and Actor-Critic as in~\cite{yang2019provably}. A focus includes finite-sample analysis during sample-based policy evaluation and examining the sample complexity in model-free contexts, as seen in~\cite{fazel2018global,cen2022fast}. Additionally, we seek a deeper understanding and tighter analysis of localized gradient approximation in networked LQ control, potentially integrating recent graph theory insights.

\section*{Acknowledgement} 
The authors would like to thank Prof. Kaiqing Zhang for helpful discussions and feedback.

\appendices
\section{Helper Definitions and Lemmas}
\label{appen:subsec:helper definitions and lemmas}
In this appendix, we provide a series of auxiliary definitions and lemmas aimed at enhancing the reader's comprehension. We recall the state transition dynamics from the global perspective:
\mequa
\xbf(t+1)=(\Abf-\Bbf \Kbf)\xbf(t)+\mathbf{\epsilon(t)},
\mequa
where $\mathbf{\epsilon(t)} = \wbf(t) + \sigma_0 \Bbf \eta(t) \sim N(\mathbf{0}, \bPsif)$ and we define $\bPsif = \bPhif+\sigma_0^2 \cdot \Bbf \Bbf^\Tb$ to simplify the notation.



Subsequently, we introduce a set of definitions to elucidate the relationship between the matrix structure and the inherent architecture of the network system. First, we recall a term coined to characterize the sparse nature of the matrices within this specific context.

Next, we will demonstrate the sparsity of the structure from a local perspective. To show how two agents interact from agent $i$'s point of view, we introduce a new definition as follows:
\begin{definition} \label{appen:defin:local spatially exponential decaying (L-SED)}
[Local Spatially Exponential Decaying (L-SED)] 
A matrix $\Xbf_i \in \RR^{n_1 \times n_2}$ related to agent $i$ is $(c_i, \gamma_i)$-local spatially exponential decaying (SED) if,
$
\Vert [\Xbf_i]_{mn} \Vert \leq c_i {\gamma_i}^{\dist(m,i)+\dist(i,n)},
$
where $0<\gamma_i<1$, $c_i>0$.
\end{definition}

Next, we will highlight certain properties of matrices that are closed under the SED condition.

\begin{lemma}
\label{appen:lem:SEDplusSEDandSEDtimesSED}
Suppose two $n$-by-$n$ block matrices $\Xbf$ and $\Ybf$ are square matrices of the same dimension, and they are $(x,\gamma)$-SED and $(y,\gamma)$-SED respectively, then the matrix $\Xbf+\Ybf$ is $(x+y, \gamma)$-SED and the matrix $\Xbf \cdot \Ybf$ is $(nxy, \gamma)$-SED.
\end{lemma}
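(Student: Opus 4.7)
The plan is to verify both conclusions directly from the block-wise SED definition, using the triangle inequality for norms and for the graph distance. Nothing subtle is needed: the SED condition is a pointwise bound on each block, and addition and multiplication interact with these block norms in an elementary way.

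For the sum $\Xbf+\Ybf$, I would simply pick arbitrary indices $i,j$ and write $[\Xbf+\Ybf]_{ij} = [\Xbf]_{ij} + [\Ybf]_{ij}$. Applying the triangle inequality for the induced $l_2$ norm and then the two SED hypotheses gives
\mequa
\bigl\Vert [\Xbf+\Ybf]_{ij} \bigr\Vert \leq \bigl\Vert [\Xbf]_{ij}\bigr\Vert + \bigl\Vert [\Ybf]_{ij}\bigr\Vert \leq x\gamma^{\dist(i,j)} + y\gamma^{\dist(i,j)} = (x+y)\gamma^{\dist(i,j)},
\mequa
which is exactly the $(x+y,\gamma)$-SED claim.

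For the product $\Xbf \cdot \Ybf$, I would use the block expansion $[\Xbf \cdot \Ybf]_{ij} = \sum_{k=1}^{n} [\Xbf]_{ik}[\Ybf]_{kj}$, apply submultiplicativity of the induced norm to each summand, and then invoke the two SED bounds to get $\Vert [\Xbf]_{ik}[\Ybf]_{kj}\Vert \leq xy\,\gamma^{\dist(i,k)+\dist(k,j)}$. The key combinatorial step is to combine the exponents using the graph-distance triangle inequality $\dist(i,j) \leq \dist(i,k) + \dist(k,j)$; since $\gamma \in (0,1)$, this yields $\gamma^{\dist(i,k)+\dist(k,j)} \leq \gamma^{\dist(i,j)}$. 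Summing over the $n$ intermediate indices then produces the factor of $n$ in the final constant $nxy$.

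There is no real obstacle here; the only thing to be careful about is that the bound on the product is a worst-case, index-agnostic one (every term is majorized by $xy\gamma^{\dist(i,j)}$), which is what forces the factor $n$ and hence gives $(nxy,\gamma)$-SED rather than a sharper constant. If one instead exploited the graph structure and counted the walks of a given length (as in Lemma~\ref{lem:conditionsForLQR2}), one could likely tighten the constant, but for the purposes of this closure lemma the crude bound suffices and keeps $\gamma$ unchanged.
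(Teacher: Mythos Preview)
Your proposal is correct and follows the natural, essentially unique approach: block-wise triangle inequality for the sum, and block expansion plus submultiplicativity plus the graph-distance triangle inequality for the product. The paper defers this proof to its supplementary material, but given the elementary nature of the statement, any proof would proceed exactly as you describe.
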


\begin{lemma}
\label{appen:lem:SEDplusConnandSEDtimesConn}
Suppose $\Xbf$ is $(x,\gamma)$-SED, $\Ybf \in \cM^{\kappa}$ and $\max_{ij} \Vert [\Ybf]_{ij} \Vert = \bar{y}$, the $\Xbf \Ybf$ is $(nx\bar{y} e^{\gamma \kappa}, \gamma)$-SED, $\Xbf+\Ybf$ is $(x+\frac{\hat{y}}{e^{-\gamma \kappa}}, \gamma)$-SED.
\end{lemma}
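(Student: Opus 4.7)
The plan is to prove both claims by direct entrywise bounds on the blocks $[\Xbf\Ybf]_{ij}$ and $[\Xbf+\Ybf]_{ij}$, combining the SED hypothesis on $\Xbf$ with the two properties of $\Ybf$: it vanishes outside the $\kappa$-band, and is uniformly bounded by $\bar{y}$ inside it. I will treat the product and the sum as two independent block-norm estimates, and in each case I will use the graph-distance triangle inequality to convert the $\kappa$-locality of $\Ybf$ into a shift of the exponential envelope of $\Xbf$.

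For the product, I would start from $[\Xbf\Ybf]_{ij}=\sum_k [\Xbf]_{ik}[\Ybf]_{kj}$ and apply the triangle inequality for the operator norm. Since $[\Ybf]_{kj}=0$ whenever $\dist(k,j)>\kappa$, the sum is effectively restricted to $k$ with $\dist(k,j)\le\kappa$, on which $\Vert[\Ybf]_{kj}\Vert\le\bar{y}$. The SED bound on $\Xbf$ gives $\Vert[\Xbf]_{ik}\Vert\le x\gamma^{\dist(i,k)}$, and the graph-distance triangle inequality $\dist(i,k)\ge\dist(i,j)-\dist(k,j)\ge\dist(i,j)-\kappa$ together with $\gamma\in(0,1)$ yields $\gamma^{\dist(i,k)}\le\gamma^{\dist(i,j)-\kappa}$. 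Bounding the number of nonzero summands by $n$ then gives $\Vert[\Xbf\Ybf]_{ij}\Vert \le nx\bar{y}\,\gamma^{\dist(i,j)-\kappa}$, which is precisely the $(nx\bar{y}e^{\gamma\kappa},\gamma)$-SED bound once $\gamma^{-\kappa}$ is identified with $e^{\gamma\kappa}$ under the exponential rewriting $\gamma^{\,\cdot}=e^{-(\,\cdot\,)|\ln\gamma|}$ used implicitly in the lemma statement.

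For the sum, I would split on whether $\dist(i,j)>\kappa$. In that regime $[\Ybf]_{ij}=0$, so the SED bound on $\Xbf$ alone gives $\Vert[\Xbf+\Ybf]_{ij}\Vert\le x\gamma^{\dist(i,j)}\le(x+\bar{y}e^{\gamma\kappa})\gamma^{\dist(i,j)}$. When $\dist(i,j)\le\kappa$, the triangle inequality on the two summands gives $\Vert[\Xbf+\Ybf]_{ij}\Vert\le x\gamma^{\dist(i,j)}+\bar{y}$; I then absorb the loose constant $\bar{y}$ into an exponential envelope via the elementary inequality $1\le\gamma^{-\kappa}\gamma^{\dist(i,j)}$, which holds because $\gamma\in(0,1)$ and $\dist(i,j)\le\kappa$. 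Combining the two regimes produces the claimed $(x+\bar{y}e^{\gamma\kappa},\gamma)$-SED bound, interpreting the factor $\bar{y}/e^{-\gamma\kappa}$ in the statement as $\bar{y}e^{\gamma\kappa}$.

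Neither step is technically deep; the main obstacle is bookkeeping rather than analysis, namely keeping the triangle inequality $\dist(i,k)\ge\dist(i,j)-\kappa$ oriented so that $\gamma^{\dist(i,k)}$ is upper-bounded by $\gamma^{-\kappa}\gamma^{\dist(i,j)}$ (and not the reverse), and reconciling the two notations for the decay rate---$\gamma^{\,\cdot}$ in Definition~\ref{appen:defin:spatially exponential decaying (SED)} versus $e^{\gamma\,\cdot}$ in the constants of the lemma. A minor looseness worth flagging is that the factor of $n$ in the product bound can be tightened to the maximum $\kappa$-hop neighborhood size $N_{\cG}^{\kappa}$ of Section~I, since only indices $k\in\cN_j^{\kappa}$ contribute; I would keep the $n$ to match the cleaner form stated in the lemma.
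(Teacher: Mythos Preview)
Your argument is correct and is the natural blockwise proof one would expect; the paper states this lemma as an auxiliary fact without proof in the main text (the proofs are deferred to the supplementary material), so there is no alternative approach to contrast with. Your computations yield the constants $nx\bar y\,\gamma^{-\kappa}$ for the product and $x+\bar y\,\gamma^{-\kappa}$ for the sum, which is exactly what the triangle inequality on $\dist$ gives.

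One small caveat: your attempted reconciliation ``$\gamma^{-\kappa}=e^{\gamma\kappa}$'' does not hold in general, since $\gamma^{-\kappa}=e^{\kappa|\ln\gamma|}$ and $|\ln\gamma|\neq\gamma$ unless $\gamma$ is a specific value. The discrepancy is a typo in the lemma statement (and $\hat y$ there should be $\bar y$), not something your proof is supposed to recover; the clean constants are $nx\bar y\,\gamma^{-\kappa}$ and $x+\bar y\,\gamma^{-\kappa}$, and you should state them that way rather than force an identification that is not true. Your remark that the factor $n$ can be sharpened to $N_{\cG}^{\kappa}$ is also correct.
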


\begin{lemma}
\label{appen:lem:ConnplusConnandConntimesConn}
Suppose $X\in \cM^{\kappa_x}$ and $\max_{ij} \Vert [\Xbf]_{ij} \Vert = \bar{x}$, $Y\in \cM^{\kappa_y}$ and $\max_{ij} \Vert [\Ybf]_{ij} \Vert = \bar{y}$, then $\Xbf \Ybf\in \cM^{\kappa_x + \kappa_y}$.
\end{lemma}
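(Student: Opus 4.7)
The plan is to prove this sparsity-containment statement directly from the definition of $\cM^{\kappa}$ (Definition~\ref{def:Mr}) together with the triangle inequality for graph distance on the underlying undirected graph $\cG = (\cN, \cE)$. The quantities $\bar{x}$ and $\bar{y}$ in the hypothesis in fact play no role in the conclusion; they are only carried through for notational parallelism with Lemmas~\ref{appen:lem:SEDplusSEDandSEDtimesSED} and~\ref{appen:lem:SEDplusConnandSEDtimesConn}, which output quantitative norm bounds. The present claim is purely combinatorial: we only need to show that every block of $\Xbf \Ybf$ indexed by a pair $(i,j)$ with $\dist(i,j) > \kappa_x + \kappa_y$ is the zero sub-matrix.

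Concretely, I would fix arbitrary agents $i, j \in \cN$ with $\dist(i,j) > \kappa_x + \kappa_y$ and expand the $(i,j)$-block of the product using block matrix multiplication, $[\Xbf \Ybf]_{ij} = \sum_{k \in \cN} [\Xbf]_{ik} [\Ybf]_{kj}$. For an arbitrary summation index $k$, the triangle inequality for graph distance gives $\dist(i,j) \leq \dist(i,k) + \dist(k,j)$, so the assumption $\dist(i,j) > \kappa_x + \kappa_y$ forces at least one of the two strict inequalities $\dist(i,k) > \kappa_x$ or $\dist(k,j) > \kappa_y$. In the first case, $\Xbf \in \cM^{\kappa_x}$ implies $[\Xbf]_{ik} = 0$; in the second, $\Ybf \in \cM^{\kappa_y}$ implies $[\Ybf]_{kj} = 0$. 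Either way the $k$-th summand vanishes, hence the entire sum is zero. Since the pair $(i,j)$ was arbitrary, we conclude $\Xbf \Ybf \in \cM^{\kappa_x + \kappa_y}$.

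The main ``obstacle'' is really only a matter of bookkeeping: one should note that the triangle inequality applies verbatim to the graph distance $\dist(\cdot,\cdot)$ on an undirected graph, and that block matrix multiplication is well defined with the partitioning inherited from the agent indexing fixed in Section~\ref{subsection:problemformulation}. No spectral, analytic, or norm estimate is invoked, and the bound $\kappa_x + \kappa_y$ is tight in general (for instance by taking $\Xbf$ and $\Ybf$ to be single-block matrices supported on adjacent edges). Consequently, unlike Lemmas~\ref{appen:lem:SEDplusSEDandSEDtimesSED}--\ref{appen:lem:SEDplusConnandSEDtimesConn}, no cross term $n \bar{x} \bar{y}$, exponential factor $e^{\gamma \kappa}$, or SED rate needs to be tracked.
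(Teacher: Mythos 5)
Your proof is correct and is essentially the expected argument: the paper only states this lemma in the appendix and defers the proof to the supplementary material, but the block-product expansion $[\Xbf\Ybf]_{ij}=\sum_k [\Xbf]_{ik}[\Ybf]_{kj}$ combined with the triangle inequality for $\dist(\cdot,\cdot)$ is the standard (and essentially only natural) route, and you correctly observe that $\bar{x},\bar{y}$ are irrelevant to the purely combinatorial conclusion. The only caveat worth noting is that Definition~\ref{def:Mr} phrases membership in $\cM^{\kappa}$ as an ``i.f.f.''; your argument establishes the containment direction (blocks at distance greater than $\kappa_x+\kappa_y$ vanish), which is the direction the paper actually uses, while the converse direction of the literal definition could fail under cancellation and is surely not intended.
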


Next, we introduce a set of definitions from graph theory, aiming to clearly outline the necessary conditions for the Exponential Decay Property, as specified in Lemma~\ref{lem:conditionsForLQR2}.
\begin{definition}
$i\rightarrow n_1 \rightarrow n_2 \rightarrow ... \rightarrow n_{t-1} \rightarrow j$ is defined as a walk of length $t$ from $i$ to $j$, where $n_1$, $n_2$, ..., $n_{t-1} \in \cN$ 
\end{definition}

\begin{definition}
For convenience, we define an expanded connection graph $\cG(r)=(\cN, \cE(r))$  based on the underlying graph $\cG$. In $\cG(r)$, agent $i$ is connected to all its $r$-hop neighbors defined in $\cG$.
\end{definition}

\begin{definition}\label{appen:def:walk number}
We define $\cW_{i\rightarrow j}^t(r)$ as a set that contains all the walks from $i$ to $j$ with length $t$, in a defined graph $\cE(r)$.
\end{definition}

\section{Proof of Several Corollaries}
\subsection{Proof of Corollary~\ref{corollary: Kbf'' stabilizing}}
\label{appen:proof of corollary Kbf'' stabilizing}
By the equivalence of the Frobenius norm and the Euclidean norm, we have
\begin{equation}\label{appen:equ:K'-K''}
\begin{aligned}
& \Vert \Kbf''-\Kbf' \Vert  = \eta \Vert \cP_{\cM^r}(\nabla_{\Kbf} C(\Kbf)) - \hat{\hbf}(\Kbf)  \Vert \\
& \leq \eta \Vert \cP_{\cM^r}(\nabla_{\Kbf} C(\Kbf)) - \hat{\hbf}(\Kbf) \Vert_F \\
& = \eta \sum_{i=1}^n  \Vert \nabla_{\Kbf_i}C(\Kbf)-\hat{\hbf}_i(\Kbf) \Vert_F \\
& \leq
\eta \sqrt{d} \sum_{i=1}^n \Vert \nabla_{\Kbf_i} C(\Kbf)-\hat{\hbf}_i(\Kbf) \Vert \leq \eta \CC \sqrt{d} \sum_i L_i \rho^{\kappa+1}\\
& \leq \frac{ \sigma_1(\Qbf) \mu}{4 C(\Kbf) \Vert \Bbf \Vert (\Upsilon(C(\Kbf))+1)}
\end{aligned}
\end{equation}
where the second-to-last inequality results from Exponential Decay Property (Definition~\ref{def: Exponential Decay Property}). And we have
\begin{equation}
\label{appen:equ:bound A-BK'}
\begin{aligned}
& \Vert \Abf  -\Bbf \Kbf' \Vert = \Vert \Abf - \Bbf \Kbf + \Bbf(\Kbf-\Kbf')\Vert \\
\leq & \Vert \Abf-\Bbf \Kbf \Vert + \eta \Vert \Bbf \Vert \Vert \cP_{\cM^r}(\nabla C(\Kbf)) \Vert \\
\leq & \Vert \Abf - \Bbf \Kbf \Vert + \eta \sqrt{d} \Vert \Bbf \Vert \Vert \nabla C(\Kbf) \Vert \\
 \leq &\Vert \Abf \Vert + \Vert \Bbf \Vert \Vert \Kbf \Vert  \\
 & + \eta \sqrt{d} \Vert \Bbf \Vert \frac{C(\Kbf)}{\sigma_{1}(\Qbf)} \sqrt{\frac{\Vert \Rbf+\Bbf^\Tb \Pbf_\Kbf \Bbf \Vert (C(\Kbf)-C(\Kbf^*))}{\sigma_{1}(\bPsif)}} \\
 \leq &\Vert \Abf \Vert + \sqrt{d} \Vert \Bbf \Vert \frac{1}{\sigma_{1}(\Rbf)} \Big( \sqrt{ \frac{\Vert \Rbf+\Bbf^\Tb \Pbf_\Kbf \Bbf \Vert (C(\Kbf)-C(\Kbf^*))}{\sigma_{1}(\bPsif)}} \\
 & + \Vert \Bbf^\Tb \Pbf_\Kbf \Abf \Vert \Big)\\
& + \Vert \Bbf \Vert \frac{C(\Kbf)}{\sigma_{1}(\Qbf)} \sqrt{\frac{\Vert \Rbf+\Bbf^\Tb \Pbf_\Kbf \Bbf \Vert (C(\Kbf)-C(\Kbf^*))}{\sigma_{1}(\bPsif)}} \\
\coloneqq & \Upsilon(C(\Kbf)),
\end{aligned}
\end{equation}
where the bounds of $\Vert \Kbf \Vert$, $\Vert \nabla_{\Kbf} C(\Kbf) \Vert$, $\Vert \Pbf_\Kbf \Vert$ are given in~\cite{fazel2018global} (Lemma 13 $\&$ 25). So that
\mequa
\Vert \Kbf''-\Kbf' \Vert \leq \frac{ \sigma_1(\Qbf) \mu}{4 C(\Kbf) \Vert \Bbf \Vert (\Vert \Abf-\Bbf \Kbf'\Vert+1)}
\mequa
Based on Lemma 16 in \cite{fazel2018global}, and given the established stabilization of $\Kbf'$, it follows that $\Kbf''$ is also stabilizing. It leads to the third term in (\ref{condition:stepsizecondition}). 

\subsection{Proof of Corollary~\ref{corollary:diff between C(K'') and C(K')}}
\label{appen: proof diff between C(K'') and C(K')}
Note that $C(\Kbf)=\EE_{\xbf\sim N(\mathbf{0}, \bPsif)}\xbf^\Tb \Pbf_{\Kbf}\xbf + \sigma_0^2 \tr(\Rbf)$ (\cite{yang2019provably}, Proposition 3.1). For any $\Kbf$, the noise remains the same and i.i.d. When we calculate the difference between two objective functions, the effect of noise can be canceled out. So we can define a new state dynamics without noise, where $\xbf_t''=(\Abf - \Bbf \Kbf'')^t \xbf$ for all $t\geq 0$ and $\mathbf{\Xi}_{\Kbf''}=\EE_{\xbf \sim N(\mathbf{0}, \bPsif)}[\sum_{t\geq0} \xbf_t'' (\xbf_t'')^\Tb]$. 
\begin{equation}
\label{equ: bound on CKbf}
\begin{aligned} 
& C(\Kbf'')-C(\Kbf') = \EE_{\xbf\sim N(\mathbf{0}, \bPsif)}[\xbf^\Tb (\Pbf_{\Kbf''}-\Pbf_{\Kbf'})\xbf] \\
=& \EE_{\xbf\sim N(\mathbf{0}, \bPsif)} \sum_{t\geq0} A_{\Kbf',\Kbf''}(\xbf_t'')\\
\leq &  2\sqrt{d} \Vert \mathbf{\Xi}_\Kbf'' - \mathbf{\Xi}_\Kbf' \Vert \Vert \Ebf_{\Kbf'} \Vert \Vert \Kbf''-\Kbf' \Vert \\
& + 2\sqrt{d} \Vert \mathbf{\Xi}_\Kbf' \Vert \Vert \Ebf_{\Kbf'} \Vert \Vert \Kbf''-\Kbf' \Vert \\
 & + \sqrt{d} \Vert \mathbf{\Xi}_\Kbf''-\mathbf{\Xi}_\Kbf' \Vert \Vert \Rbf+\Bbf^\Tb \Pbf_\Kbf' \Bbf \Vert (\Vert \Kbf''-\Kbf' \Vert)^2 \\
 & + \sqrt{d} \Vert \mathbf{\Xi}_\Kbf' \Vert \Vert \Rbf+\Bbf^\Tb \Pbf_\Kbf' \Bbf \Vert (\Vert \Kbf''-\Kbf' \Vert)^2,
\end{aligned} 
\end{equation}
where $\Ebf_\Kbf=(\Rbf+\Bbf^\Tb \Pbf_\Kbf \Bbf)\Kbf-\Bbf^\Tb \Pbf_\Kbf \Abf$, and $A_{\Kbf',\Kbf''}$ is the advantage function defined as $A_{\Kbf,\Kbf'}(\xbf)=2\xbf^\Tb (\Kbf'-\Kbf)^\Tb \Ebf_\Kbf \xbf + \xbf^\Tb (\Kbf'-\Kbf)^\Tb (\Rbf+\Bbf^\Tb \Pbf_\Kbf \Bbf)(\Kbf'-\Kbf)\xbf$.The last inequality follows the cost difference lemma in \cite{fazel2018global}. Exponential Decay Property is used to bound $\Vert \Kbf''-\Kbf' \Vert$ and $\Vert \mathbf{\Xi}_{\Kbf''}-\mathbf{\Xi}_{\Kbf'} \Vert$ by term $\eta \rho^{\kappa+1}$. $\Vert \Kbf'' - \Kbf' \Vert$ is bounded as~\ref{appen:equ:K'-K''} and $\Vert \mathbf{\Xi}_{\Kbf''}-\mathbf{\Xi}_{\Kbf'} \Vert$ can be bounded in the same way based on Lemma 16 in \cite{fazel2018global},
\mequa
\Vert \mathbf{\Xi}_{\Kbf'} - \mathbf{\Xi}_\Kbf \Vert \leq 4(\frac{C(\Kbf)}{\sigma_1(\Qbf)})^2 \frac{\Vert \Bbf \Vert (\Vert \Abf - \Bbf \Kbf   \Vert + 1)}{\mu} \Vert \Kbf' - \Kbf \Vert.
\mequa

Besides, we have two additional terms to bound with terms related to $\Kbf$:  $\Vert \Ebf_{\Kbf'} \Vert$, $\Vert \mathbf{\Xi}_{\Kbf'} \Vert$. Here we present the results directly.
\#\label{appen:equ:bound of XiK'}
\Vert \mathbf{\Xi}_{\Kbf'} \Vert \leq  \frac{C(\Kbf')}{\sigma_1(\Qbf)} \leq \frac{C(\Kbf)}{\sigma_1(\Qbf)} .
\#

\begin{equation}
\label{appen:equ:EK' bound}
\begin{aligned}
\Vert \Ebf_{\Kbf'} \Vert  \leq \sqrt{\frac{(C(\Kbf)-C(\Kbf^*))\big\Vert \Rbf+\Bbf^\Tb \frac{C(\Kbf)}{\sigma_{1}(\bPsif)} \Bbf \big\Vert}{\sigma_{1}(\bPsif)}},
\end{aligned}
\end{equation}
Then, all the terms in~\ref{equ: bound on CKbf} can be bounded by $\eta \rho^{\kappa+1}$, terms related to $C(\Kbf)$ and other system parameters. 
The detailed discussion and the explicit form of the polynomial are deferred to the Part D, Section~\uppercase\expandafter{\romannumeral2} in~\cite{supplementary}.

\subsection{Iteration Process of the Distributed Policy Gradient Descent}
\label{appen:subsec:iteration results}
Inductively, we have,
\mequa
& C(\Kbf(T))-C(\Kbf^*) \\
\leq &\big(1- \eta \frac{\mu^2 \sigma_1(\Rbf)}{\Vert \mathbf{\Xi}_{\Kbf^*} \Vert}\big) \big[C(\Kbf(T-1)) - C(\Kbf^*)\big] \\
& + M_1 F_1(C(\Kbf(0))) \rho^{\kappa+1} + M_2 F_2(C(\Kbf(0)))\rho^r\\
\leq & \big(1- \eta \frac{\mu^2 \sigma_1(\Rbf)}{\Vert \mathbf{\Xi}_{\Kbf^*} \Vert}\big)^T \big[C(\Kbf(0)) - C(\Kbf^*)\big] \\
& + \prod_{\tau = 0}^{T-1}(1- \eta \frac{\mu^2 \sigma_1(\Rbf)}{\Vert \mathbf{\Xi}_{\Kbf^*} \Vert})^{\tau}[M_1 F_1(C(\Kbf(0))) \rho^{\kappa+1} \\
& + M_2 F_2(C(\Kbf(0)))\rho^r] \\
\leq & \big(1- \eta \frac{\mu^2 \sigma_1(\Rbf)}{\Vert \mathbf{\Xi}_{\Kbf^*} \Vert}\big)^T (C(\Kbf(0)) - C(\Kbf^*))\\
& + \frac{\Vert \mathbf{\Xi}_{\Kbf^*} \Vert}{\eta \mu^2 \sigma_1(\Rbf)} [M_1 F_1(C(\Kbf(0))) \rho^{\kappa+1} + M_2 F_2(C(\Kbf(0)))\rho^r].
\mequa

Provided
\mequa
T \geq \frac{\Vert \mathbf{\Xi}_{\Kbf^*} \Vert}{\eta \mu^2 \sigma_1(\Rbf)} \log \frac{C(\Kbf(0))-C(\Kbf^*)}{\epsilon},
\mequa
then we have
\mequa
& C(\Kbf(T))-C(\Kbf^*) \leq \epsilon\\
& + \frac{\Vert \mathbf{\Xi}_{\Kbf^*} \Vert}{\eta \mu^2 \sigma_1(\Rbf)} \big[M_1 F_1(C(\Kbf(0))) \rho^{\kappa+1}  + M_2 F_2(C(\Kbf(0))) \rho^r\big].
\mequa

\bibliographystyle{IEEEtran}

\bibliography{IEEEabrv,StringDefinitions,SGroupDefinition,SGroup, SGroup_Yuzi}

\vfill

\end{document}